\algrenewcommand{\algorithmiccomment}[1]{\hfill$\triangleleft$ {\footnotesize \textsl{#1}}}
\algrenewcommand\algorithmicindent{1em}
\renewcommand{\ALG@beginalgorithmic}{\small}
\DeclareMathOperator{\PTIME}{\mathbf{P}}
\DeclareMathOperator{\NP}{\mathbf{NP}}
\DeclareMathOperator{\Psp}{\mathbf{PSPACE}}
\DeclareMathOperator{\LOGSPACE}{\mathbf{LOGSPACE}}
\DeclareMathOperator{\EXPSPACE}{\mathbf{EXPSPACE}}
\DeclareMathOperator{\NEXPTIME}{\mathbf{NEXPTIME}}
\DeclareMathOperator{\co}{\mathbf{co-}\!}
\newcommand{\Th}{\PTIME^{\NP[O(\log n)]}}
\newcommand{\Thsq}{\PTIME^{\NP[O(\log^2 n)]}}
\DeclareMathOperator{\Trk}{Trk}
\DeclareMathOperator{\Pref}{Pref}
\DeclareMathOperator{\Suff}{Suff}
\DeclareMathOperator{\states}{states}
\DeclareMathOperator{\lst}{lst}
\DeclareMathOperator{\fst}{fst}
\DeclareMathOperator{\hsA}{\langle A\rangle}
\DeclareMathOperator{\hsL}{\langle L\rangle}
\DeclareMathOperator{\hsB}{\langle B\rangle}
\DeclareMathOperator{\hsE}{\langle E\rangle}
\DeclareMathOperator{\hsD}{\langle D\rangle}
\DeclareMathOperator{\hsO}{\langle O\rangle}
\DeclareMathOperator{\hsAt}{\langle \overline{A}\rangle}
\DeclareMathOperator{\hsLt}{\langle \overline{L}\rangle}
\DeclareMathOperator{\hsBt}{\langle \overline{B}\rangle}
\DeclareMathOperator{\hsEt}{\langle \overline{E}\rangle}
\DeclareMathOperator{\hsDt}{\langle \overline{D}\rangle}
\DeclareMathOperator{\hsOt}{\langle \overline{O}\rangle}
\newcommand{\A}{\mathsf{A}}
\newcommand{\Abar}{\mathsf{\overline{A}}}
\newcommand{\AAbar}{\mathsf{A\overline{A}}}
\newcommand{\AAbarB}{\mathsf{A\overline{A}B}}
\newcommand{\AAbarE}{\mathsf{A\overline{A}E}}
\newcommand{\AB}{\mathsf{AB}}
\newcommand{\AbarE}{\mathsf{\overline{A}E}}
\newcommand{\AbarB}{\mathsf{\overline{A}B}}
\renewcommand{\AE}{\mathsf{AE}}
\newcommand{\AAbarBbarEbar}{\mathsf{A\overline{A}\overline{B}\overline{E}}}
\newcommand{\AAbarBBbar}{\mathsf{A\overline{A}B\overline{B}}}
\newcommand{\AAbarEEbar}{\mathsf{A\overline{A}E\overline{E}}}
\newcommand{\Bbar}{\mathsf{\overline{B}}}
\newcommand{\Ebar}{\mathsf{\overline{E}}}
\newcommand{\B}{\mathsf{B}}
\newcommand{\E}{\mathsf{E}}
\newcommand{\BE}{\mathsf{BE}}
\newcommand{\AAbarBBbarEbar}{\mathsf{A\overline{A}B\overline{B}\overline{E}}}
\newcommand{\AAbarEBbarEbar}{\mathsf{A\overline{A}E\overline{B}\overline{E}}}
\newcommand{\HSprop}{\mathsf{Prop}}
\DeclareMathOperator{\mods}{ModSubf_\AAbar}
\DeclareMathAlphabet{\mathpzc}{OT1}{pzc}{m}{it}
\newcommand{\Ku}{\ensuremath{\mathpzc{K}}}
\newtheorem{definition}{Definition}
\newtheorem{theorem}{Theorem}
\newtheorem{lemma}{Lemma}
\newtheorem{corollary}{Corollary}
\title{Model Checking the Logic of Allen's Relations\\ \emph{Meets} and \emph{Started-by} is $\PTIME^{\NP}$-Complete}
\author{Laura Bozzelli
\institute{Technical University of Madrid (UPM), Madrid, Spain\hspace*{-0.9cm}}
\email{laura.bozzelli@fi.upm.es}
\and
Alberto Molinari \qquad Angelo Montanari
\institute{University of Udine, Udine, Italy}
\email{molinari.alberto@gmail.com angelo.montanari@uniud.it}
\and
Adriano Peron
\institute{University of Napoli ``Federico II'', Napoli, Italy}
\email{adrperon@unina.it}
\and
Pietro Sala
\institute{University of Verona, Verona, Italy}
\email{pietro.sala@univr.it}
}
\begin{document}
\maketitle

\begin{abstract}
In the plethora of fragments of Halpern and Shoham's modal logic of time intervals (HS), the logic $\AB$ of Allen's relations \emph{Meets} and \emph{Started-by} is at a central position. Statements that may be true at certain intervals, but at no sub-interval of them, such as accomplishments, as well as metric constraints about the length of intervals, that force, for instance, an interval to be at least (resp., at most, exactly) $k$ points long, can be expressed in $\AB$. Moreover, over the linear order of the natural numbers $\mathbb{N}$, it subsumes the (point-based) logic LTL, as it can easily encode the next and until modalities. Finally, it is expressive enough to capture the $\omega$-regular languages, that is, for each $\omega$-regular expression $R$ there exists an $\AB$ formula $\varphi$ such that the language defined by $R$ coincides with the set of models of $\varphi$ over $\mathbb{N}$. It has been shown that the satisfiability problem for $\AB$ over $\mathbb{N}$ is $\EXPSPACE$-complete. Here we prove that, under the homogeneity assumption, its model checking problem is $\Delta^p_2 = \PTIME^{\NP}$-complete (for the sake of comparison, the model checking problem for full HS is $\EXPSPACE$-hard, and the only known decision procedure is nonelementary). Moreover, we show that the modality for the Allen relation \emph{Met-by} can be added to $\AB$ at no extra cost ($\AAbarB$ is $\PTIME^{\NP}$-complete as well).

%In this paper we study the model checking problem for Halpern and Shoham's interval temporal logic (HS).
%Usually, point-based temporal logics, such as LTL, CTL and CTL$^*$, are used in model checking as the 
%specification language to encode properties of systems to be analyzed,
%since they are an elegant formalism which can be effectively exploited in several contexts for practical purposes. 
%However, there exist some properties of systems that we may want to verify, which are ``interval-based'' 
%(like actions with duration, accomplishments, and in particular temporal aggregations), thus requiring 
%interval logics to be formalized (indeed, CTL, LTL and the like cannot express them). 
%
%Model checking algorithms for interval temporal logics have recently been proposed to check interval properties of computations. 
%In this paper, we consider six fragments of HS, for the binary ordering relations between intervals (Allen's relations) \emph{meets}, %\emph{met-by}, plus \emph{started-by} or \emph{finished-by}, which seem to be a very good compromise between expressiveness and %computational complexity.
%As a matter of fact their model checking problem is in $\Delta^p_2=\PTIME^{\NP}$ (while for full HS it is $\EXPSPACE$-hard). 
%In this way we complete the classification of HS fragments for the aforementioned relations.
\end{abstract}

\section{Introduction}

In this paper, we investigate the model checking problem for the interval logic of Allen's Relations \emph{Meets} and \emph{Started-by}.
Given a model of a system (generally, a Kripke structure) and a temporal logic formula, which specifies the expected properties of the system,
model checking algorithms verify, in fully automatic way, whether the model satisfies the formula; if this is not the case, they provide a 
counterexample, that is, a computation of the system failing to satisfy some property. Model checking has been successfully employed in formal 
verification as well as in various areas of AI, ranging from planning to configuration and multi-agent systems~\cite{DBLP:conf/ecp/GiunchigliaT99,DBLP:conf/tacas/LomuscioR06}.

Standard point-based temporal logics, such as LTL, CTL, and CTL$^{*}$~\cite{emerson1986sometimes,pnueli1977temporal}, are commonly used as 
specification languages. Even though they turn out to be well-suited for a variety of application domains, there are relevant system 
properties, involving, for instance, actions with duration, accomplishments, and temporal aggregations, which are inherently ``interval-based'' 
and thus cannot be properly dealt with by temporal logics that allow one to predicate over computation states only.
To overcome these limitations,
%of point-based temporal logics, 
one can resort to \emph{interval temporal logics} (ITLs), that take intervals---instead of points---as their primitive entities~\cite{HS91}, which have been successfully applied in various areas of computer science and AI, including hardware and software verification, computational linguistics, and planning \cite{LM13,digital_circuits_thesis,DBLP:journals/ai/Pratt-Hartmann05,DBLP:series/eatcs/ChaochenH04}. %DBLP:journals/logcom/BowmanT03

ITL model checking is the context of this paper. In order to check interval properties of computations, one needs to collect information about states into computation stretches: each finite path of a Kripke structure is interpreted as an interval, whose labelling is defined on the basis of the labelling of the component states.
Among ITLs, Halpern and Shoham's modal logic of time intervals HS~\cite{HS91} is the main reference. It features one modality for each possible ordering relation between a pair of intervals apart from equality (the so-called Allen's relations~\cite{All83}).
The \emph{satisfiability problem} for HS has been thoroughly studied, and it turns out to be highly undecidable for all relevant (classes of) linear orders~\cite{HS91}. The same holds for most HS fragments~\cite{DBLP:journals/amai/BresolinMGMS14}; however, some meaningful exceptions exist, including the logic of temporal neighbourhood $\AAbar$ and the logic of sub-intervals $\mathsf{D}$~\cite{BGMS10,BGMS09}.
The \emph{model checking problem} for HS has entered the research agenda only recently~\cite{bmmps16,LM13,LM14,LM16,MMMPP15,MMP15B,MMP15,MMPS16}.
In \cite{MMMPP15}, Molinari et al.\ deal with model checking for full HS over Kripke structures under the homogeneity assumption~\cite{Roe80},
showing its non-elementary decidability by means of a suitable small model theorem ($\EXPSPACE$-hardness has been proved in \cite{bmmps16}).
%They introduce the fundamental elements of the problem, namely, the interpretation of HS formulas over (abstract) interval models, the mapping of finite 
%Kripke structures into (abstract) interval models, and the notion of track descriptor, and they prove a small model theorem showing its non-elementary 
%decidability. Moreover, in \cite{bmmps16}, the authors prove its $\EXPSPACE$-hardness.
%
Since then, the attention was brought to HS fragments, which, similarly to what happens with satisfiability, are often computationally better. 

In this paper we first prove that model checking for the logic $\AAbarB$ (resp., $\AAbarE$) of Allen's relations \emph{Meets}, \emph{Met-by}, and \emph{Started-by} (resp., \emph{Finished-by}) is in $\PTIME^{\NP}$; then we prove that its fragment $\AB$ (resp., $\AbarE$) is $\PTIME^{\NP}$-hard; finally we show that its fragment $\AbarB$ (resp., $\AE$) belongs to $\Thsq$ and it is $\Th$-hard.
%In this paper, we consider several fragments, namely, $\AB$, $\AbarE$, $\AAbarB$ and $\AAbarE$ whose model checking problem is complete for 
%$\PTIME^{\NP}$, and $\AbarB$ and $\AE$ for which it is in $\Thsq$ and $\Th$-hard.
%
$\PTIME^{\NP}$ (also denoted as $\Delta^p_2$) is the class of problems decided by a deterministic polynomial time Turing machine that queries an NP oracle.
The classes $\Th$ and $\Thsq$ are analogous, but the number of queries is bounded by $O(\log n)$ and $O(\log^2 n)$, respectively, being $n$ the input size~\cite{gottlob1995,schnoebelen2003}. 
These three classes are higher than both $\NP$ and $\co\NP$ in the \emph{polynomial time hierarchy}, and closed under complement. It is worth noticing that, whereas we know many natural problems which are complete for $\Sigma^p_2$ or $\Pi^p_2$ (in general, for $\Sigma^p_k$ and $\Pi^p_k$, with $k\geq 2$), the classes $\PTIME^{\NP}$, $\Th$, and $\Thsq$ are not so ``populated'' (and neither are the classes $\Delta^p_k$, for $k > 2$). 
Among the few natural problems complete for $\PTIME^{\NP}$, we would like to mention model checking for several fragments of CTL$^*$, including CTL$^+$, ECTL$^+$, and FCTL \cite{LMS01}. As for the other two classes, very recently Molinari et al.\ have shown that model checking $\A$, $\Abar$, or $\AAbar$ formulas is in $\Thsq$ and hard for $\Th$ \cite{MMPS16}.

\smallskip

\noindent \textbf{Related work.}
In~\cite{LM13,LM14}, Lomuscio and Michaliszyn address the model checking problem for some fragments of HS extended with epistemic modalities. Their semantic assumptions considerably differ from those made in \cite{MMMPP15}, making it difficult to compare the outcomes of the two research lines. 
Moreover, they consider a restricted form of model checking, which verifies a specification against a single (finite) initial computation interval (this is in general a limitation, unless some operators of HS are available, such as $\hsA$): their goal is indeed to reason about a given computation of a multi-agent system, rather than on all its admissible computations.
Recently they have shown how to exploit regular expressions in order to specify the way in which the intervals of a Kripke structure get labelled~\cite{LM16}. Such an extension leads to a significant increase in 
the expressiveness of HS formulas.
%without a corresponding increase in complexity, as the complexity bounds given in \cite{LM13,LM14} under their rather ``weak'' original semantics, still hold with the new one: the model checking problem for $\BED$ (of Allen's relations \emph{started-by}, \emph{finished-by}, and \emph{contains}), extended with epistemic modalities, is still PSPACE-complete, and it is non-elementarily decidable for $\ABbarL$ (Allen's relations \emph{meets}, \emph{starts}, and \emph{before}).

\smallskip

\noindent \textbf{Organization of the paper.}
In the next section we introduce the fundamental elements of the model checking problem for HS and its fragments. Then, in Section~\ref{sec:AAbarBalgo}, we provide a $\PTIME^{\NP}$ model checking algorithm for $\AAbarB$ (and $\AAbarE$) formulas. In Section~\ref{sec:ABhard} we prove the $\PTIME^{\NP}$-hardness of model checking for $\AB$ (and $\AbarE$). 
$\PTIME^{\NP}$-completeness of $\AB$, $\AbarE$, $\AAbarB$ and $\AAbarE$ follows.
Finally we show that the problem for formulas of $\AbarB$ and $\AE$ is in $\Thsq$ and hard for $\Th$.
%Conclusions provide an assessment of the work done and outline future research directions.

\section{Preliminaries}\label{sec:backgr}

%In this section, we give syntax and semantics of Halpern and Shoham's interval temporal logic 
%HS with respect to (abstract) interval models. Moreover, we provide a suitable mapping from Kripke 
%structures to interval models that allows us to interpret HS formulas over Kripke structures. 
%%and then to define the notion of interval-based model checking. 
%Finally, we introduce the basic notion of $B_k$-descriptor and its properties. 

\textbf{The interval temporal logic HS.}
An interval algebra to reason about intervals and their relative order was proposed by Allen in~\cite{All83}, while a systematic logical study of interval representation and reasoning was done a few years later by Halpern and Shoham, who introduced the interval temporal logic HS featuring one modality for each Allen
%interval 
relation, but equality~\cite{HS91}.
Table~\ref{allen} depicts 6 of the 13 Allen's relations,
%possible binary ordering relations between a pair of intervals, 
together with the corresponding HS (existential) modalities. The other 7 relations are the 6 inverses (given a binary relation $\mathpzc{R}$, the inverse $\overline{\mathpzc{R}}$ is such that $b \overline{\mathpzc{R}} a$ if and only if $a \mathpzc{R} b$) and equality. 

\begin{table}[tb]
\centering
\caption{Allen's relations and corresponding HS modalities.}\label{allen}
\vspace*{0.1cm}
\resizebox{\width}{0.8\height}{
\begin{tabular}{cclc}
\hline
\rule[-1ex]{0pt}{3.5ex} Allen relation & HS & Definition w.r.t. interval structures &  Example\\ 
\hline

&   &   & \multirow{7}{*}{\begin{tikzpicture}[scale=0.96]
\draw[draw=none,use as bounding box](-0.3,0.2) rectangle (3.3,-3.1);
\coordinate [label=left:\textcolor{red}{$x$}] (A0) at (0,0);
\coordinate [label=right:\textcolor{red}{$y$}] (B0) at (1.5,0);
\draw[red] (A0) -- (B0);
\fill [red] (A0) circle (2pt);
\fill [red] (B0) circle (2pt);

\coordinate [label=left:$v$] (A) at (1.5,-0.5);
\coordinate [label=right:$z$] (B) at (2.5,-0.5);
\draw[black] (A) -- (B);
\fill [black] (A) circle (2pt);
\fill [black] (B) circle (2pt);

\coordinate [label=left:$v$] (A) at (2,-1);
\coordinate [label=right:$z$] (B) at (3,-1);
\draw[black] (A) -- (B);
\fill [black] (A) circle (2pt);
\fill [black] (B) circle (2pt);

\coordinate [label=left:$v$] (A) at (0,-1.5);
\coordinate [label=right:$z$] (B) at (1,-1.5);
\draw[black] (A) -- (B);
\fill [black] (A) circle (2pt);
\fill [black] (B) circle (2pt);

\coordinate [label=left:$v$] (A) at (0.5,-2);
\coordinate [label=right:$z$] (B) at (1.5,-2);
\draw[black] (A) -- (B);
\fill [black] (A) circle (2pt);
\fill [black] (B) circle (2pt);

\coordinate [label=left:$v$] (A) at (0.5,-2.5);
\coordinate [label=right:$z$] (B) at (1,-2.5);
\draw[black] (A) -- (B);
\fill [black] (A) circle (2pt);
\fill [black] (B) circle (2pt);

\coordinate [label=left:$v$] (A) at (1.3,-3);
\coordinate [label=right:$z$] (B) at (2.3,-3);
\draw[black] (A) -- (B);
\fill [black] (A) circle (2pt);
\fill [black] (B) circle (2pt);

\coordinate (A1) at (0,-3);
\coordinate (B1) at (1.5,-3);
\draw[dotted] (A0) -- (A1);
\draw[dotted] (B0) -- (B1);
\end{tikzpicture}}\\ 

\textsc{meets} & $\hsA$ & $[x,y]\mathpzc{R}_A[v,z]\iff y=v$ &\\ 

\textsc{before} & $\hsL$ & $[x,y]\mathpzc{R}_L[v,z]\iff y<v$ &\\ 
 
\textsc{started-by} & $\hsB$ & $[x,y]\mathpzc{R}_B[v,z]\iff x=v\wedge z<y$ &\\ 

\textsc{finished-by} & $\hsE$ & $[x,y]\mathpzc{R}_E[v,z]\iff y=z\wedge x<v$ &\\ 

\textsc{contains} & $\hsD$ & $[x,y]\mathpzc{R}_D[v,z]\iff x<v\wedge z<y$ &\\ 

\textsc{overlaps} & $\hsO$ & $[x,y]\mathpzc{R}_O[v,z]\iff x<v<y<z$ &\\

\hline
\end{tabular}}
\end{table}

The HS language consists of a set of proposition letters $\mathpzc{AP}$, the Boolean connectives $\neg$ and $\wedge$, 
%the logical constants $\top$ and $\bot$ (respectively \emph{true} and \emph{false}), 
and a temporal modality for each of the (non trivial) Allen's relations, i.e., $\hsA$, $\hsL$, $\hsB$, $\hsE$, $\hsD$, $\hsO$, $\hsAt$, $\hsLt$, $\hsBt$, $\hsEt$, $\hsDt$, and $\hsOt$.
%\begin{equation*}
%\langle X\rangle\in\{\hsA,\hsL,\hsB,\hsE,\hsD,\hsO\}\\
%\end{equation*}
%or
%\begin{equation*}
%\langle \overline{X}\rangle\in\{\hsAt,\hsLt,\hsBt,\hsEt,\hsDt,\hsOt\},
%\end{equation*}
%one for each of the (non trivial) Allen's relations. 
%
HS formulas are defined by the grammar
$
    \psi ::= p \;\vert\; \neg\psi \;\vert\; \psi \wedge \psi \;\vert\; \langle X\rangle\psi \;\vert\; \langle \overline{X}\rangle\psi,
$
where $p\in\mathpzc{AP}$ and $X\in\{A,L,B,E,D,O\}$.
In the following, we shall also exploit as abbreviations the standard logical connectives for disjunction $\vee$, implication $\rightarrow$, and double implication $\leftrightarrow$.
%, e.g., we will write $\psi \vee \phi$ for $\neg\psi \wedge \neg\phi$, $\psi \rightarrow \phi$ for $\neg %\psi \vee \phi$, and $\psi \leftrightarrow \phi$ for $\left(\psi \rightarrow \phi\right)\wedge\left(\phi %\rightarrow \psi\right)$.
%
Furthermore, for any modality $X$, the dual universal modalities $[X]\psi$ and $[\overline{X}]\psi$ are defined as $\neg\langle X\rangle\neg\psi$ and $\neg\langle \overline{X} \rangle\neg\psi$, respectively. 
Finally, given any subset of Allen's relations $\{X_1,\cdots,X_n\}$, we denote by $\mathsf{X_1 \cdots X_n}$ the HS fragment featuring existential (and universal) modalities for $X_1,\ldots, X_n$ only. 

%The joint nesting depth of modalities $B$ and $E$ in an HS formula $\psi$, denoted by $\depthbe(\psi)$, is defined by induction on the complexity of the formula as:
    %\begin{itemize}
%        (i) $\depthbe(p)=0$, for any $p\in\mathpzc{AP}$;
%        (ii) $\depthbe(\neg\psi)=\depthbe(\psi)$;
%        (iii) $\depthbe(\psi\wedge\phi)=\max\{\depthbe(\psi),\depthbe(\phi)\}$;
%        (iv) $\depthbe(\hsX\psi)=1+\depthbe(\psi)$, for $X= B$ or $X= E$;
%        (v) $\depthbe(\hsX\psi)=\depthbe(\psi)$, for $X\neq B$, $X \neq E$.
    %\end{itemize}
%If we consider formulas $\psi$ of HS fragments devoid of modality $E$ (resp., $B$), the nesting depth of modality $B$ (resp., $E$) in $\psi$, denoted as $\depthb(\psi)$ (resp., $\depthe(\psi)$), accounts for modality $B$ (resp., $E$) only, and $\depthb(\psi)=\depthbe(\psi)$ (resp., $\depthe(\psi)=\depthbe(\psi)$).
    
W.l.o.g., we assume the \emph{non-strict semantics of HS}, which admits intervals consisting of a single point\footnote{All the results we prove in the paper hold for the strict semantics as well.}. Under such an assumption, all HS modalities can be expressed in terms of modalities 
%$\hsA$, $\hsB, \hsE$, 
%%and the transposed modalities 
%$\hsAt, \hsBt$, and $\hsEt$ 
$\hsB, \hsE, \hsBt$, and $\hsEt$~\cite{HS91}.
%\cite{Ven90}.
%\footnote{Since later we will focus on the HS fragments $\AAbarEEbar$ and $\AAbarBBbar$---which respectively do not feature $\hsBt$, $\hsB$ and $\hsEt$, %$\hsE$---we add both $\hsA$ and $\hsAt$ to the considered set of HS modalities.}.
%Actually, in the non-strict semantics, $\hsB$, $\hsE$, $\hsBt$, and $\hsEt$ are sufficient to define all other HS modalities. However, since later we consider %$\AAbarEEbar$ and $\AAbarBBbar$---which do not feature $\hsBt$, $\hsB$ and $\hsEt$, $\hsE$ resp.---we have to deal with $\hsA$ and $\hsAt$ separately.}.
%
HS can thus be seen as a multi-modal logic with these 
%aforementioned 
%$6$ 
$4$ primitive modalities
%, namely, $\hsA$, $\hsB$, $\hsE$, $\hsAt, \hsBt$, and $\hsEt$.
and its semantics can be defined over a multi-modal Kripke structure, called \emph{abstract interval model}, where intervals are treated as atomic objects and Allen's relations as 
%simple 
binary relations between pairs of intervals.
Since later we will focus on some HS fragments excluding $\hsBt$ and $\hsEt$, we add both $\hsA$ and $\hsAt$ to the considered set of HS modalities.

\begin{definition}\cite{MMMPP15}
An \emph{abstract interval model} is a tuple $\mathpzc{A}=(\mathpzc{AP},\mathbb{I},A_\mathbb{I},B_\mathbb{I},E_\mathbb{I},\sigma)$, where
%\begin{compactitem}
     $\mathpzc{AP}$ is a set of proposition letters,
     $\mathbb{I}$ is a possibly infinite set of atomic objects (worlds),
     $A_\mathbb{I}$, $B_\mathbb{I}$, and $E_\mathbb{I}$ are three binary relations over $\mathbb{I}$, and
     $\sigma:\mathbb{I}\mapsto 2^{\mathpzc{AP}}$ is a (total) labeling function, which assigns a set of proposition letters to each world.
%\end{compactitem}
\end{definition}
In the interval setting, $\mathbb{I}$ is interpreted as a set of intervals and $A_\mathbb{I}$, $B_\mathbb{I}$, and $E_\mathbb{I}$ as Allen's relations $A$ (\emph{meets}), $B$ (\emph{started-by}), and $E$ (\emph{finished-by}), respectively; $\sigma$ assigns to each interval in $\mathbb{I}$ the set of proposition letters that hold over it.

Given an abstract interval model $\mathpzc{A}=(\mathpzc{AP},\mathbb{I}, A_\mathbb{I}, B_\mathbb{I},E_\mathbb{I},\sigma)$ and an interval $I\in\mathbb{I}$, the truth of an HS formula over $I$ is inductively defined as follows:
\begin{compactitem}
    \item $\mathpzc{A},I\models p$ iff $p\in \sigma(I)$, for any $p\in\mathpzc{AP}$;
%    \item $\mathpzc{A},I\models \top$ and $\mathpzc{A},I\not\models \bot$
    \item $\mathpzc{A},I\models \neg\psi$ iff it is not true that $\mathpzc{A},I\models \psi$ (also denoted as $\mathpzc{A},I\not\models \psi$);
    \item $\mathpzc{A},I\models \psi \wedge \phi$ iff $\mathpzc{A},I\models \psi$ and $\mathpzc{A},I\models \phi$;
%        \item $\mathpzc{A},I\models \psi \vee \phi$ iff $\mathpzc{A},I\models \psi$ or $\mathpzc{A},I\models \phi$;
    \item $\mathpzc{A},I\models \langle X\rangle\psi$, for $X \in\{A,B,E\}$, iff there exists $J\in\mathbb{I}$ such that $I\, X_\mathbb{I}\, J$ and $\mathpzc{A},J\models \psi$;
    \item $\mathpzc{A},I\models \langle \overline{X}\rangle\psi$, for $\overline{X} \in\{\overline{A},\overline{B},\overline{E}\}$, iff there exists $J\in\mathbb{I}$ such that $J\, X_\mathbb{I}\, I$ and $\mathpzc{A},J\models \psi$.
\end{compactitem}
%\emph{Satisfiability} and \emph{validity} are defined as usual: an HS formula $\psi$ is satisfiable if there are an interval model $\mathpzc{A}$ and a world/ interval $I$ such that $\mathpzc{A},I\models \psi$; $\psi$ is valid, denoted $\models \psi$, if $\mathpzc{A},I\models \psi$ for all worlds/intervals $I$ of any interval model $\mathpzc{A}$. 

\smallskip

\noindent \textbf{Kripke structures and abstract interval models.}
In model checking, finite state systems are usually modelled as Kripke structures. 
%In the following,
%we first recall the definition of finite Kripke structure and then 
%In \cite{DBLP:conf/time/MontanariMPP14}, the authors define a mapping 
In \cite{MMMPP15}, the authors define a mapping
from Kripke structures to abstract interval models, that allows one
to specify interval properties of computations by means of HS formulas. 
%There are several possible mappings: the one we choose is well suited for system verification.

\begin{definition}%[Finite Kripke structure]
A \emph{finite Kripke structure} is a tuple $\mathpzc{K}=(\mathpzc{AP},W, \delta,\mu,w_0)$, where $\mathpzc{AP}$ is a set of proposition letters, $W$ is a finite set of states, 
%(\emph{worlds}), 
$\delta\subseteq W\times W$ is a left-total relation between pairs of states,
%(\emph{accessibility} relation), 
$\mu:W\mapsto 2^\mathpzc{AP}$ is a total labelling function, and $w_0\in W$ is the initial state.
\end{definition}

For all $w\in W$, $\mu(w)$ is the set of proposition letters that hold at $w$,
while $\delta$ is the transition relation that describes the evolution of the system over time.

%A simple Kripke structure, consisting of two states only, is reported in the following 
%example.

%\begin{example}\label{ex:kripke1}
\begin{wrapfigure}[4]{l}{0.36\linewidth}
\vspace*{-0.4cm}
\centering
\begin{tikzpicture}[->,>=stealth,thick,shorten >=1pt,auto,node distance=2cm,every node/.style={circle,draw}]
    \node [style={double}](v0) {$\stackrel{v_0}{p}$};
    \node (v1) [right of=v0] {$\stackrel{v_1}{q}$};
    \draw (v0) to [bend right] (v1);
    \draw (v1) to [bend right] (v0);
    \draw (v0) to [loop left] (v0);
    \draw (v1) to [loop right] (v1);
\end{tikzpicture}
\vspace*{-0.15cm}
\caption{The Kripke structure $\mathpzc{K}_2$.}\label{KEquiv}
\end{wrapfigure}

Figure~\ref{KEquiv} depicts the finite Kripke structure $\mathpzc{K}_2 = (\{p,q\},\allowbreak \{v_0,v_1\},\delta,\mu,v_0)$,
where 
$\delta=\{(v_0,v_0),(v_0,v_1),(v_1,v_0),(v_1,v_1)\}$,
$\mu(v_0)\!=\!\{p\}$, and $\mu(v_1)\!=\!\{q\}$. 
%\cite{DBLP:conf/time/MontanariMPP14}. 
The initial state $v_0$ is identified by a double circle.
%\end{example}

\begin{definition}%[Track]
A \emph{track} $\rho$ over a finite Kripke structure $\mathpzc{K}=(\mathpzc{AP},W,\delta,\mu,w_0)$ is a finite sequence of states $v_1\cdots v_n$, with $n\geq 1$, such that $(v_i,v_{i+1})\in \delta$ for $i = 1,\ldots ,n-1$.
\end{definition}

\noindent
Let $\Trk_\mathpzc{K}$ be the (infinite) set of all tracks over a finite Kripke structure $\mathpzc{K}$. For any track $\rho=v_1\cdots v_n \in \Trk_\mathpzc{K}$, we define:
\begin{compactitem}
\item $|\rho|=n$, $\fst(\rho)=v_1$, $\lst(\rho)=v_n$, and for $1\leq i\leq |\rho|$, $\rho(i)=v_i$;
\item $\states(\rho)=\{v_1,\cdots,v_n\}\subseteq W$;
%$\intstates(\rho)=\{v_2,\cdots,v_{n-1}\}\subseteq W$,
\item $\rho(i,j)=v_i\cdots v_j$, with $1\leq i \leq j\leq |\rho|$, is the subtrack of $\rho$ bounded by $i$ and $j$;
\item $\Pref(\rho)=\{\rho(1,i) \mid 1\leq i\leq |\rho|-1\}$ and $\Suff(\rho)=\{\rho(i,|\rho|) \mid 2\leq i\leq |\rho|\}$ are the sets of all proper prefixes and suffixes of $\rho$, respectively.
\end{compactitem}
%Given $\rho,\rho' \in \Trk_\mathpzc{K}$, we denote by $\rho\cdot\rho'$ the concatenation of the tracks $\rho$ and $\rho'$.
%Moreover, if $\lst(\rho)=\fst(\rho')$, we denote by $\rho\star\rho'$ the track $\rho(1,|\rho|-1)\cdot\rho'$. In particular, 
%when $|\rho|=1$, $\rho\star\rho'=\rho'$. In the following, when we write $\rho\star\rho'$, we implicitly assume that 
%$\lst(\rho)=\fst(\rho')$.
%
%Notice that the length of tracks, prefixes, and suffixes is greater than 1, as they will be mapped into strict intervals. 
Finally, if $\fst(\rho)=w_0$ (the initial state of $\mathpzc{K}$), 
%where $w_0$ is the initial state of $\mathpzc{K}$, 
$\rho$ is called an \emph{initial track}. 

An abstract interval model (over $\Trk_\mathpzc{K}$) can be naturally associated with a finite Kripke structure $\mathpzc{K}$ by considering the set of intervals as the set of tracks of $\mathpzc{K}$. Since $\mathpzc{K}$ has loops ($\delta$ is left-total), the number of tracks in $\Trk_\mathpzc{K}$, and thus the number of intervals, 
%of $\mathpzc{A}_\mathpzc{K}$, 
is infinite.

\begin{definition}\label{def:inducedmodel}%[Induced abstract interval model ]
The \emph{abstract interval model induced by a finite Kripke structure} $\mathpzc{K}=(\mathpzc{AP},W,\delta,\mu,w_0)$ is  
$\mathpzc{A}_\mathpzc{K}=(\mathpzc{AP},\mathbb{I},A_\mathbb{I},B_\mathbb{I},E_\mathbb{I},\sigma)$, where
%\begin{itemize}
    $\mathbb{I}=\Trk_\mathpzc{K}$,
    $A_\mathbb{I}=\{(\rho,\rho')\in\mathbb{I}\times\mathbb{I}\mid \lst(\rho)=\fst(\rho')\}$,
    $B_\mathbb{I}=\{(\rho,\rho')\in\mathbb{I}\times\mathbb{I}\mid \rho'\in\Pref(\rho)\}$,
    $E_\mathbb{I}=\{(\rho,\rho')\in\mathbb{I}\times\mathbb{I}\mid \rho'\in\Suff(\rho)\}$, and
    $\sigma:\mathbb{I}\mapsto 2^\mathpzc{AP}$ is such that $\sigma(\rho)=\bigcap_{w\in\states(\rho)}\mu(w)$, for all $\rho\in\mathbb{I}$.
%\end{itemize}
\end{definition}
%In Definition \ref{def:inducedmodel}, 
\noindent
Relations $A_\mathbb{I},B_\mathbb{I}$, and $E_\mathbb{I}$ are interpreted as the Allen's relations $A,B$, and $E$, respectively. Moreover, according to the definition of $\sigma$, 
%a proposition letter 
$p\in\mathpzc{AP}$ holds over $\rho=v_1\cdots v_n$ 
%if and only if 
iff it holds over all the states $v_1, \cdots , v_n$ of $\rho$. This conforms to the \emph{homogeneity principle}, according to which a proposition letter holds over an interval 
%if and only if 
if and only if it holds over all its subintervals~\cite{Roe80}.

%Satisfiability of an HS formula over a Kripke structure can be given in terms of induced abstract interval models.
\begin{definition}%{(Satisfiability of HS formulas over Kripke structures)}
Let $\mathpzc{K}$ be a finite Kripke structure and
$\psi$ be an HS formula; 
%We say that the pair $(\mathpzc{K},\rho)$ satisfies $\psi$, 
we say that a track $\rho\in\Trk_\mathpzc{K}$ satisfies $\psi$,
denoted as $\mathpzc{K},\rho\models \psi$, iff it holds that $\mathpzc{A}_\mathpzc{K},\rho\models \psi$.
%\end{definition}
%
%\begin{definition}%{(Model checking)}
%Let $\mathpzc{K}$ be a finite Kripke structure and $\psi$ be an HS formula. 
Moreover, we say that
$\mathpzc{K}$ models $\psi$, denoted as $\mathpzc{K}\models \psi$, iff 
for all \emph{initial} tracks $\rho'\in\Trk_\mathpzc{K}$ it holds that $\mathpzc{K},\rho'\models \psi$.
%\newline
%\null\hfill\fbox{for all \emph{initial} tracks $\rho\in\Trk_\mathpzc{K}$, it holds that %$\mathpzc{K},\rho\models \psi$.}\hfill\null
The \emph{model checking problem} for HS over finite Kripke structures is 
the problem of deciding whether $\mathpzc{K}\models \psi$.
\end{definition}

We conclude with a simple example (a simplified version of the one given in \cite{MMMPP15}), showing that the fragments investigated in this paper can express meaningful properties of state transition systems.

%\begin{example}

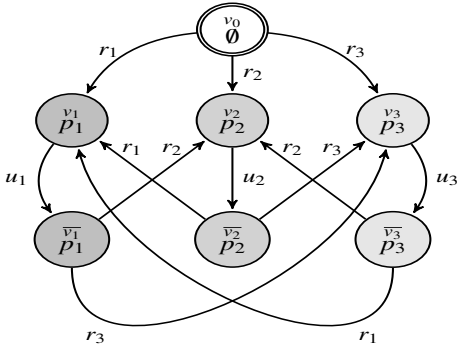
\begin{wrapfigure}[12]{L}{0.40\linewidth}
\vspace*{-0.3cm}
\centering
\resizebox{0.97\width}{0.72\height}{
\begin{tikzpicture}[->,>=stealth',shorten >=1pt,auto,node distance=2.2cm,thick,main node/.style={circle,draw}]

  \node[main node,style={double}] (1) {$\stackrel{v_0}{\emptyset}$};
  \node[main node,fill=gray!35] (3) [below=0.7cm of 1] {$\stackrel{v_2}{p_2}$};
  \node[main node,fill=gray!50] (2) [left of=3] {$\stackrel{v_1}{p_1}$};
  \node[main node,fill=gray!20] (4) [right of=3] {$\stackrel{v_3}{p_3}$};
  \node[main node,fill=gray!50] (5) [below of=2] {$\stackrel{\overline{v_1}}{p_1}$};
  \node[main node,fill=gray!35] (6) [below of=3] {$\stackrel{\overline{v_2}}{p_2}$};
  \node[main node,fill=gray!20] (7) [below of=4] {$\stackrel{\overline{v_3}}{p_3}$};

  \path[every node/.style={font=\small}]
    (1) edge [bend right] node[left] {$r_1$} (2)
        edge node {$r_2$} (3)
        edge [bend left] node[right] {$r_3$} (4)
    (2) edge [bend right] node [left] {$u_1$} (5)
    (3) edge node {$u_2$} (6)
    (4) edge [bend left] node [right] {$u_3$} (7)
    (5) edge node[very near end,left] {$r_2$} (3)
    (5) edge [out=270,in=260,looseness=1.3] node [near start,swap] {$r_3$} (4)
    (6) edge node[very near end,right] {$r_1$} (2)
    (6) edge node[very near end,left] {$r_3$} (4)
    (7) edge [out=270,in=280,looseness=1.3] node [near start] {$r_1$} (2)
    (7) edge node[very near end,right] {$r_2$} (3)
    ;
\end{tikzpicture}}
\vspace{-1.45cm}
\caption{The Kripke structure $\mathpzc{K}_{Sched}$.}\label{KSched}
\end{wrapfigure}

In Figure~\ref{KSched}, we provide an example of a finite Kripke structure $\mathpzc{K}_{Sched}$ that models the behaviour of a scheduler serving three processes which are continuously requesting the use of a common resource. The initial state 
%(denoted by a double circle) 
is $v_0$: no process is served in that state. In the states $v_i$ and $\overline{v}_i$, with $i \in \{1,2,3\}$, the $i$-th process is served (this is denoted by the fact that $p_i$ holds in those states). For the sake of readability, edges are marked either by $r_i$, for $request(i)$, or by $u_i$, for $unlock(i)$. Edge labels do not have a semantic value, that is, they are neither part of the structure definition, nor proposition letters; they are simply used to ease reference to edges. 
Process $i$ is served in state $v_i$, then, after ``some time'', a transition $u_i$ from $v_i$ to $\overline{v}_i$ is taken; subsequently, process $i$ cannot be served again immediately, as $v_i$ is not directly reachable from $\overline{v}_i$ (the scheduler cannot serve the same process twice in two successive rounds). A transition $r_j$, with $j\neq i$, from $\overline{v}_i$ to $v_j$ is then taken and process $j$ is served. This structure can easily be generalised to a higher number of processes.

We now show how some meaningful properties to be checked over 
%the Kripke structure 
$\mathpzc{K}_{Sched}$ 
%of Figure \ref{KSched} 
can be expressed in the HS fragment $\AbarE$. 
In all the following formulas, we force the validity of the considered properties over all legal computation sub-intervals by using the modality $[E]$ (all computation sub-intervals are suffixes of at least one initial track of the Kripke structure). The first formula requires that at least 2 proposition letters are witnessed in any suffix of length at least 4 of an initial track. Since a process cannot be executed twice in a row, it is satisfied by $\mathpzc{K}_{Sched}$:
$\mathpzc{K}_{Sched}\models[E]\big(\hsE^3\top \rightarrow (\chi(p_1,p_2) \vee \chi(p_1,p_3) \vee \chi(p_2,p_3))\big)$ where $\chi(p,q)\!=\!\hsE\hsAt p \wedge \hsE\hsAt q$.
The second formula requires that, in any suffix of length at least 11 of an initial track, process 3 is executed at least once in some internal states (\emph{non starvation}). $\mathpzc{K}_{Sched}$ does not satisfy it, because the scheduler can postpone the execution of a process ad libitum: $\mathpzc{K}_{Sched}\not\models[E](\hsE^{10}\top \rightarrow \hsE\hsAt p_3)$. 
The third formula requires that, in any suffix of length at least 6 of an initial track, $p_1,p_2$, and $p_3$ are all witnessed. The only way to satisfy this property would be to force the scheduler to execute the three processes in a strictly periodic manner (\emph{strict alternation}), that is, $p_ip_jp_kp_ip_jp_kp_ip_jp_k\cdots$, for $i,j,k \in \{1,2,3\}$ and $i \neq  j \neq  k \neq i$, but $\mathpzc{K}_{Sched}$ does not meet such a
requirement: $\mathpzc{K}_{Sched}\not\models[E](\hsE^5 \rightarrow (\hsE\hsAt p_1 \wedge \hsE\hsAt p_2 \wedge \hsE\hsAt p_3))$.
%
%The truth of the next statements can be easily checked:
%\begin{compactitem}
%    \item $\mathpzc{K}_{Sched}\models[E]\big(\hsE^3\top \rightarrow (\chi(p_1,p_2) \vee \chi(p_1,p_3) \vee \chi(p_2,p_3))\big)$, where $\chi(p,q)\!=\!\hsE\hsAt p \wedge \hsE\hsAt q$;
%    \item $\mathpzc{K}_{Sched}\not\models[E](\hsE^{10}\top \rightarrow \hsE\hsAt p_3)$;
  %  \item $\mathpzc{K}_{Sched}\not\models[E](\hsE^5 \rightarrow (\hsE\hsAt p_1 \wedge \hsE\hsAt p_2 \wedge \hsE\hsAt p_3))$.
%\end{compactitem}
%The first formula states that in any suffix of length at least 4 of an initial track, at least 2 proposition letters are witnessed. $\mathpzc{K}_{Sched}$ satisfies the formula since a process cannot be executed twice in a row.
% The second formula states that in any suffix of length at least 11 of an initial track, process 3 is executed at least once in some internal states (\emph{non starvation}). $\mathpzc{K}_{Sched}$ does not satisfy the formula since the scheduler can avoid executing a process ad libitum. The third formula states that in any suffix of length at least 6 of an initial track, $p_1,p_2,p_3$ are all witnessed. 
%The only way to satisfy this property is to constrain the scheduler to execute the three processes in a strictly periodic manner (\emph{strict alternation}), i.e., $p_ip_jp_kp_ip_jp_kp_ip_jp_k\cdots$, for $i,j,k \in \{1,2,3\}$ and $i \neq  j \neq  k \neq i$, but this is not the case.
%\end{example}

\smallskip

\noindent\textbf{The general picture.}
We now describe known and new complexity results about the model checking problem for HS fragments (see Figure~\ref{fig:overv} for a graphical account).
%\section{Model checking and HS fragments}\label{sec:frag}
%In the previous section, we have shown that, for 
%For any given finite Kripke structure $\mathpzc{K}$, one can find a corresponding abstract interval model %$\mathpzc{A}_\mathpzc{K}$, having one interval for each track of $\mathpzc{K}$. 
%
%In this section, we 
%In \cite{MMMPP15}, given a finite Kripke structure and an HS formula $\varphi$,  
%the authors show how to obtain a \emph{finite} representation for each (possibly infinite) set of %tracks which are equivalent with respect to satisfiability of HS formulas of the same %structural complexity as $\varphi$. 

\begin{figure}[tb]
\centering
\resizebox{0.8\width}{0.55\height}{
\newcommand{\cellThree}[3]{
\begin{tabular}{c|c}
\rule[-1ex]{0pt}{3.5ex}
\multirow{2}{*}{#1} & #2 \\ 
\hhline{~=}\rule[-1ex]{0pt}{3.5ex}
 & #3 
\end{tabular}}

\newcommand{\cellTwo}[2]{\begin{tabular}{c|c}
\rule[-1ex]{0pt}{3.5ex}
#1 & #2 \\
\end{tabular}}

\begin{tikzpicture}[->,>=stealth',shorten >=1pt,auto,semithick,main node/.style={rectangle,draw, inner sep=0pt}]  

\tikzstyle{gray node}=[fill=gray!30]

    \node [main node](0) at (-4,0) {\cellTwo{$\AAbarBbarEbar$}{PSPACE-complete $^{2,3}$}};
    \node [main node](1) at (3,0)  {\cellTwo{$\Bbar$}{$\Psp$-complete $^4$}};
    \node [main node](21) at (3,0.8)  {\cellTwo{$\Ebar$}{$\Psp$-complete $^4$}};
    \node [main node](31) at (3,2.3)  {\cellTwo{$\AAbarEEbar$}{$\Psp$-complete $^5$}};
    \node [main node](32) at (-2.3,2.3)  {\cellTwo{$\AAbarBBbar$}{$\Psp$-complete $^5$}};
    \node [main node](2) at (-6.5,-3.4) {\cellThree{$\AAbar$}{$\Thsq$ $^4$}{$\Th$-hard $^4$}};
    \node [main node](22) at (0,-3.4) {\cellThree{$\A$, $\Abar$}{$\Thsq$ $^4$}{$\Th$-hard $^4$}};
    \node [main node](92) at (6.5,-3.4) {\cellThree{$\AbarB$, $\AE$}{\cellcolor{gray!30}{$\Thsq$}}{\cellcolor{gray!30}{$\Th$-hard}}};
    
    \node [main node](72) at (-3.3,-1.0)  {\cellTwo{$\AAbarB$}{\cellcolor{gray!30}{$\PTIME^{\NP}$-complete}}};
    \node [main node](73) at (3.5,-1.0)  {\cellTwo{$\AAbarE$}{\cellcolor{gray!30}{$\PTIME^{\NP}$-complete}}};
    \node [main node](74) at (-3.3,-2)  {\cellTwo{$\AB$}{\cellcolor{gray!30}{$\PTIME^{\NP}$-complete}}};
    \node [main node](75) at (3.5,-2)  {\cellTwo{$\AbarE$}{\cellcolor{gray!30}{$\PTIME^{\NP}$-complete}}};
    
    \node [main node](53) at (-4,-6) {\cellTwo{$\B$}{coNP-complete $^5$}};
    \node [main node](3) at (-4,-5) {\cellTwo{$\E$}{coNP-complete $^5$}};
    \node [main node](4) at (3.5,-5.5) {\cellTwo{$\HSprop$}{coNP-complete $^3$}};
    
    \node [main node](5) at (-4,4) {\cellThree{$\AAbarBBbarEbar$}{EXPSPACE $^2$}{PSPACE-hard $^3$}};
    \node [main node](6) at (-3.5,6) {\cellThree{succinct $\AAbarBBbarEbar$}{EXPSPACE $^2$}{NEXP-hard $^2$}};
    
    \node [main node](9) at (3,6) {\cellThree{$\BE$}{nonELEMENTARY $^1$}{EXPSPACE-hard $^5$}};
    \node [main node](10) at (3,8) {\cellThree{full HS}{nonELEMENTARY $^1$}{EXPSPACE-hard $^5$}};
   
    \path
    (1) edge [swap] node {hardness} (0) 
    (0) edge  [out=150,in=190] node {hardness} (5.south)
    (4.west) edge [swap,near end] node {hardness} (3.east)
    (4.west) edge [near end] node {hardness} (53.east)
    (2.north east) edge [swap,out=370,in=170] node {upper-bound} (22.north west)
    (22.south west) edge [swap,out=190,in=-10] node {hardness} (2.south east)
    (22.east) edge node {hardness} (92.west)
    (9) edge [swap] node {hardness} (10)
    (21.north) edge node {hardness} (31.south)
    (1.west) edge [out=120,in=-45, near end] node {hardness} (32.south)
    (74.west) edge [out=130, in=230, near end] node {hardness} (72.west)
    (72.east) edge [out=310, in=50] node {upper-bound} (74.east)
    (75.west) edge [out=130, in=230, near end] node {hardness} (73.west)
    (73.east) edge [out=310, in=50] node {upper-bound} (75.east)
    ;
    
    \draw [dashed,-,gray] (-6.5,4) -- (6,4);
    \draw [dashed,-,gray] (-6.5,-2.5) -- (6,-2.5);
    \draw [dashed,-,gray] (-6.5,-4.5) -- (6,-4.5);
    \draw [dashed,-,gray] (-6.5,-0.5) -- (6,-0.5);
    
    \node[align=left](50) at (4,-7) {
    $^1$ \cite{MMMPP15}, 
    $^2$ \cite{MMP15}, 
    $^3$ \cite{MMP15B}, 
    $^4$ \cite{MMPS16},
    $^5$ \cite{bmmps16}};

\end{tikzpicture}}
\vspace{-0.2cm}
\caption{Complexity of the model checking problem for HS fragments: known results are depicted in white boxes, new ones in gray boxes.}\label{fig:overv}
\end{figure}

%In \cite{MMMPP15,DBLP:conf/time/MontanariMPP14}, the authors show that, given a finite Kripke structure $\mathpzc{K}$ and a
In \cite{MMMPP15}, Molinari et al.\ have shown that, given a Kripke structure $\mathpzc{K}$ and a bound $k$ on the structural complexity of HS formulas, i.e., on the nesting depth of $\hsE$ and $\hsB$ modalities, it is possible to obtain a \emph{finite} representation for $\mathpzc{A}_\mathpzc{K}$, which is equivalent to $\mathpzc{A}_\mathpzc{K}$ with respect to satisfiability of HS formulas with structural complexity less than or equal to $k$. Then, by exploiting such a representation, they proved that the model checking problem for (full) HS is decidable, providing an algorithm with non-elementary complexity.  In~\cite{bmmps16}, $\EXPSPACE$-hardness of the fragment $\BE$, and thus of full HS, has been shown.

The fragments $\AAbarBBbarEbar$ and $\AAbarEBbarEbar$ have been systematically studied in \cite{MMP15}. For each of them, an $\EXPSPACE$ model checking 
algorithm has been devised that, for any track of the Kripke structure, finds a satisfiability-preserving track of bounded length (\emph{track representative}). In this way, the model checking algorithm needs to check only tracks with a bounded maximum length. $\Psp$-hardness of the model checking problem for $\AAbarBBbarEbar$ and $\AAbarEBbarEbar$ has been proved in \cite{MMP15B} (if a succinct encoding of formulas is exploited, the algorithm remains in $\EXPSPACE$, but a $\NEXPTIME$ lower bound can be given~\cite{MMP15}).
Finally, it has been shown that formulas satisfying a constant bound on the nesting depth of $\hsB$ (respectively, $\hsE$) can be checked in polynomial working space~\cite{MMP15}.

Some well-behaved HS fragments, namely, $\AAbarBbarEbar$, $\Bbar$, $\Ebar$, $\AAbar$, $\A$, and $\Abar$, which are still expressive enough to capture meaningful interval properties of state transition systems and whose model checking problem has a computational complexity markedly lower than that of full HS, have been identified in \cite{MMP15B,MMPS16}. In particular the authors proved that the problem is $\Psp$-complete for the fragments $\AAbarBbarEbar$, $\Bbar$, and $\Ebar$, and in between $\Th$ and $\Thsq$~\cite{gottlob1995,schnoebelen2003} for $\AAbar$, $\A$, and $\Abar$. 
%In all cases, the complexity of the problem turns out be comparable or lower than that of LTL, that is known to be $\Psp$-complete~\cite{DBLP:journals/jacm/SistlaC85}.
%
Two other well-behaved fragments, namely, $\AAbarBBbar$ and $\AAbarEEbar$, have been investigated in~\cite{bmmps16}, showing that their model checking problem is $\Psp$-complete. In addition, the authors showed that $\B$ and $\E$ are $\co\NP$-complete (the same complexity as the model checking problem for the purely propositional HS fragment $\HSprop$~\cite{MMP15B}).

In this paper, we complete the analysis of the sub-fragments of $\AAbarBBbar$ (resp., $\AAbarEEbar$). In Section~\ref{sec:AAbarBalgo}, we devise a $\PTIME^{\NP}$ model checking algorithm for $\AAbarB$ (resp., $\AAbarE$). Then, in Section~\ref{sec:ABhard}, we prove that $\AB$ (resp., $\AbarE$) is hard for $\PTIME^{\NP}$. It immediately follows that model checking for $\AB$ and $\AAbarB$ (resp., $\AbarE$ and $\AAbarE$) formulas over finite Kripke structures is $\PTIME^{\NP}$-complete. Finally, we show that $\AbarB$ (resp., $\AE$) are in $\Thsq$ (the proof is reported in~\cite{techrep}) and hard for $\Th$ (the hardness follows from that of $\Abar$, resp., $\A$~\cite{MMPS16}).

It is worth pointing out that the fragment $\AbarB$ belongs to a lower complexity class than the fragment $\AB$ (the same for the symmetric fragments $\AE$ and $\AbarE$). Such a difference can be explained as follows.
%To have an intuition about the difference in the complexity of the two fragments, 

Let us consider a formula $\hsB \hsA \theta$, which belongs to $\A\B$. A track $\rho$ satisfies $\hsB \hsA \theta$ if there exists a prefix $\tilde{\rho}$ of $\rho$ from which a branch satisfying $\theta$ departs, i.e., a track starting from $\lst(\tilde{\rho})$. This amounts to say that $\A\B$ allows one to impose specific constraints on the branches departing from a state occurring in a given path. Such an ability will be exploited in Section~\ref{sec:ABhard} to prove the $\PTIME^{\NP}$-hardness of $\AB$.

Conversely, the fragment $\AbarB$ cannot express constraints of this form. For any given track $\rho$, modality $\hsAt$ only allows one to constrain tracks leading to the first state of $\rho$. As for modality $\hsB$, if we consider a prefix $\tilde{\rho}$ of $\rho$, the set of tracks leading to its first state is exactly the same as the set of those leading to the first state of $\rho$, as $\fst(\tilde{\rho}) = \fst(\rho)$. Therefore, pairing $\hsAt$ and $\hsB$ does not give any advantage in terms of expressiveness. Such a weakness of $\AbarB$ represents the reason why $\AbarB$ formulas can be checked in time $\Thsq$, instead of time $\PTIME^{\NP}$.

\section{A $\PTIME^{\NP}$ model checking algorithm for $\AAbarB$ formulas}\label{sec:AAbarBalgo}
In this section, we present a model checking algorithm for $\AAbarB$ formulas (Algorithm~\ref{MC}) belonging to the complexity class 
$\PTIME^{\NP}$. We recall that $\PTIME^{\NP}$ is the class of problems solvable in (deterministic) polynomial time exploiting an oracle for an $\NP$-complete problem. W.l.o.g., we restrict our attention to $\AAbarB$ formulas devoid of occurrences of conjunctions and universal modalities 
(definable, as usual, in terms of disjunctions, negations, and existential modalities).

\begin{algorithm}[tb]
\begin{algorithmic}[1]
%	\State{$V(\psi,\bullet)\gets \texttt{New\_Array}(|W|)$}
	\For{all $\hsA\phi\in \mods(\psi)$}
		\State{\texttt{MC}$(\Ku,\phi,\textsc{forward})$}
	\EndFor
	\For{all $\hsAt\phi\in \mods(\psi)$}
		\State{\texttt{MC}$(\Ku,\phi,\textsc{backward})$}
	\EndFor
	
	\For{all $v\in W$}
		\If{\textsc{direction} is \textsc{forward}}
			\State{$V_{\A}(\psi,v)\gets Success(\texttt{Oracle}(\Ku,\psi,v,\textsc{forward},V_{\A}\cup V_{\Abar}))$}
		\ElsIf{\textsc{direction} is \textsc{backward}}
			\State{$V_{\Abar}(\psi,v)\gets Success(\texttt{Oracle}(\Ku,\psi,v,\textsc{backward},V_{\A}\cup V_{\Abar}))$}
		\EndIf
	\EndFor
\end{algorithmic}
\caption{\texttt{MC}$(\Ku,\psi,\textsc{direction})$}\label{MC}
\end{algorithm}

Algorithm~\ref{MC} presents the model checking procedure for a formula $\psi$ against a Kripke structure $\Ku$. It exploits two global vectors, $V_{\A}$ and $V_{\Abar}$, which can be seen as the tabular representations of two Boolean functions taking as arguments a subformula $\phi$ of $\psi$ and a state $v$ of $\Ku$. The intuition is that the function $V_{\A} (\phi,v)$ (resp., $V_{\Abar}(\phi,v)$) returns $\top$ if and only if there exists a 
track $\rho \in \Trk_\Ku$ starting from the state $v$ (resp., leading to the state $v$) such that $\Ku, \rho \models \phi$. 
The procedure $\texttt{MC}$ is initially invoked with parameters $(\Ku,\neg\psi,\textsc{forward})$. During the execution, it instantiates the entries of
$V_{\A}$ and $V_{\Abar}$, which are exploited in order to answer the model checking problem $\Ku\models \psi$; this is, in the end, equivalent to checking whether $V_{\A}(\neg\psi,w_0)=\bot$, where $w_0$ is the initial state of $\Ku$.

Let us consider the model checking procedure $\texttt{MC}$ in more detail. Besides the Kripke structure $\Ku$ and the formula $\psi$, $\texttt{MC}$ features a third parameter, $\textsc{direction}$, which can be assigned the value $\textsc{forward}$ (resp., $\textsc{backward}$), that is used in combination with the modality $\hsA$ (resp., $\hsAt$) for a forward (resp., backward) unravelling of $\Ku$. 
$\texttt{MC}$ is applied recursively on the nesting of modalities $\hsA$ and $\hsAt$ in the formula $\psi$ (in the base case, $\psi$ features no occurrences of $\hsA$ or $\hsAt$). In order to instantiate the Boolean vectors $V_{\A}$ and $V_{\Abar}$, an oracle is invoked (lines 5--9) for each state $v$ of the Kripke structure. Such an invocation is syntactically represented by $Success(\texttt{Oracle}(\Ku,\psi,v,\textsc{direction},V_{\A}\cup V_{\Abar}))$, and it returns $\top$ whenever there exists a computation of the non-deterministic algorithm $\texttt{Oracle}(\Ku,\psi,v,\textsc{direction},\allowbreak V_{\A}\cup V_{\Abar})$ returning $\top$, namely, whenever there is a suitable track starting from, or leading to $v$ (depending on the value of the parameter $\textsc{direction}$), and satisfying $\psi$.

We now introduce the notion of $\AAbar$-\emph{modal subformulas} of $\psi$; these subformulas ``direct'' the recursive calls of $\texttt{MC}$.
\begin{definition}
The set of $\AAbar$-\emph{modal subformulas} of an $\AAbarB$ formula $\psi$, denoted by $\mods(\psi)$, is the set of subformulas of $\psi$ having either the form $\hsA \psi'$ or the form $\hsAt \psi'$, for some $\psi'$, which are \emph{not in the scope of any $\hsA$ or $\hsAt$ modality}.
\end{definition}

For instance, $\mods(\hsA\hsAt q)=\{\hsA\hsAt q\}$ and $\mods\big(\big(\hsA p\, \wedge\, \hsA\hsAt q \big)\rightarrow \hsA p\big)\allowbreak =\{\hsA p,\hsA\hsAt q\}$.
%
%Note that even if $\hsA \chi'$ and $\hsA \chi''$ (or $\hsAt \chi'$ and $\hsAt \chi''$) are two occurrences of the same subformula in different positions, we consider them as distinct $\AAbar$-modal subformulas.
%Note that if there are multiple occurrences of a maximal subformula of $\psi$, then all of these ocurrences are listed in $\mods(\psi)$.

$\texttt{MC}$ is recursively called on each formula $\phi$ such that $\hsA\phi$ or $\hsAt\phi$ belongs to $\mods(\psi)$ (lines 1--4). 
In this way, we can recursively gather in the Boolean vectors $V_{\A}$ and $V_{\Abar}$, by increasing nesting depth of the modalities $\hsA$ and $\hsAt$, the oracle answers for all the formulas $\psi'$ such that $\hsA\psi'$ or $\hsAt\psi'$ is a subformula (be it maximal or not) of $\psi$.

\begin{algorithm}[tb]
\resizebox{0.99\textwidth}{!}{
\begin{minipage}{1.1\linewidth}
\begin{multicols}{2}
\begin{algorithmic}[1]
\State{$\tilde{\rho}\gets \texttt{A\_track}(\Ku,v,|W|\cdot(2|\psi|+1)^2,\textsc{direction})$}\Comment{a track of $\Ku$ from/to $v$ of length $\leq |W|\cdot (2|\psi|+1)^2$}
%\State{$T\gets \texttt{New\_Array}(|\overline{\psi}|,|\tilde{\rho}|)$}
\For{all $\hsA \phi\in\mods(\psi)$}
    	\For{$i=1,\cdots ,|\tilde{\rho}|$}
    		\State{$T[\hsA\phi,i]\gets V_{\A}(\phi,\tilde{\rho}(i))$}
    	\EndFor
\EndFor
\For{all $\hsAt \phi\in\mods(\psi)$}
    	\For{$i=1,\cdots ,|\tilde{\rho}|$}
    		\State{$T[\hsAt\phi,i]\gets V_{\Abar}(\phi,\fst(\tilde{\rho}))$}
    	\EndFor
\EndFor

\For{all subformulas $\varphi$ of $\psi$, not contained in (or equal to) $\AAbar$-modal subformulas of $\psi$, by increasing length}
    \If{$\varphi=p$, for $p\in\mathpzc{AP}$}
        \State{$T[p,1]\gets p\in \mu(\fst(\tilde{\rho}))$}
        \For{$i=2,\cdots ,|\tilde{\rho}|$}
            \State{$T[p,i]\gets T[p,i-1]$ and $p\in \mu(\tilde{\rho}(i))$}
        \EndFor
    		
\columnbreak %%%%%%%%%%%%%%%%%%%%%%%%%%%%%%%%%%%%%%%%%%%%%%%%%%%%%%%%%%%%%%%%%%%%%%%%%%%%%%%%%%%%%%%

	\ElsIf{$\varphi=\neg \varphi_1$}
	    \For{$i=1,\cdots ,|\tilde{\rho}|$}
            \State{$T[\varphi,i]\gets$ not $T[\varphi_1,i]$}
        \EndFor        
    \ElsIf{$\varphi=\varphi_1\vee\varphi_2$}
        \For{$i=1,\cdots ,|\tilde{\rho}|$}
            \State{$T[\varphi,i]\gets T[\varphi_1,i]$ or $T[\varphi_2,i]$}
        \EndFor
	\ElsIf{$\varphi=\hsB\varphi_1$}
	    \State{$T[\varphi,1]\gets\bot$}
	    \For{$i=2,\cdots ,|\tilde{\rho}|$}
            \State{$T[\varphi,i]\gets T[\varphi,i-1]$ or $T[\varphi_1,i-1]$}
	    \EndFor
	\EndIf
\EndFor	
\Return{$T[\psi,|\tilde{\rho}|]$}
\end{algorithmic}
\end{multicols}
\end{minipage}}
\caption{\texttt{Oracle}$(\Ku,\psi,v,\textsc{direction},V_{\A}\cup V_{\Abar})$}\label{Oracle}
\end{algorithm}

Let us now consider the \emph{non-deterministic polynomial time} procedure $\texttt{Oracle}(\Ku,\psi,v,\textsc{direction},\allowbreak V_{\A}\cup V_{\Abar})$ reported in Algorithm~\ref{Oracle}, which is used as the basic engine by the oracle in the aforementioned model checking Algorithm~\ref{MC}. 
The idea underlying Algorithm~\ref{Oracle} is first to non-deterministically generate a track $\tilde{\rho}$ by unravelling the Kripke structure $\Ku$
according to the parameter $\textsc{direction}$, and then to verify $\psi$ over $\tilde{\rho}$.
Such a procedure actually exploits a result proved in \cite{bmmps16} (see, in particular, Theorem~10) stating a so-called ``polynomial-size model-track property'' for formulas of the fragment $\AAbarEEbar$: if $\rho$ is a track of $\mathpzc{K}$, $\phi$ is an $\AAbarEEbar$ formula, and $\Ku,\rho \models \phi$, then there exists  $\rho' \in \Trk_\mathpzc{K}$ such that $|\rho'|\leq |W|\cdot (2|\phi|+1)^2$, $\fst(\rho)=\fst(\rho')$, $\lst(\rho)=\lst(\rho')$, and $\Ku,\rho' \models \phi$. This property guarantees that, in order to check the satisfiability of a formula $\phi$, it is enough to consider tracks having a length bounded by $|W|\cdot (2|\phi|+1)^2$. Such a result holds by symmetry for formulas of the fragment $\AAbarBBbar$ as well.

An execution of \texttt{Oracle}$(\Ku,\psi,v,\textsc{direction},V_{\A}\cup V_{\Abar})$ starts (line 1) by \emph{non-deterministically} generating a track $\tilde{\rho}$ (having a length of at most $|W|\cdot (2|\psi|+1)^2$), with $v$ as its first (resp., last) state if the \textsc{direction} parameter is \textsc{forward} (resp., \textsc{backward}). The track is generated by visiting the unravelling of $\Ku$ (resp., of $\Ku$ with transposed edges). The remaining part of the algorithm checks \emph{deterministically} whether $\Ku,\tilde{\rho}\models\psi$. Such a verification is performed in a bottom-up way: for all the subformulas $\phi$ of $\psi$ (starting from the minimal ones) and for all the prefixes $\tilde{\rho}(1,i)$ of $\tilde{\rho}$, with $1 \leq i \leq  |\tilde{\rho}|$ (starting from the shorter ones), the procedure establishes whether $\Ku,\tilde{\rho}(1,i)\models\phi$ or not, and this result is stored in the entry $T[\phi,i]$ of a Boolean table $T$. Note that if the considered subformula of $\psi$ is an element of $\mods(\psi)$, the algorithm does not need to perform any verification, since the result is already available in the Boolean vectors $V_{\A}$ and $V_{\Abar}$ (as a consequence of the previously completed calls to the procedure \texttt{Oracle}), and the table $T$ is updated accordingly (lines 2--7).
For the remaining subformulas, the entries of $T$ are computed, as we already said, in a bottom-up fashion (lines 8--22). The result of the overall verification is stored 
in $T[\psi,|\tilde{\rho}|]$ and returned (line 23).

The algorithm presented here for checking formulas of $\AAbarB$ can trivially be adapted to check formulas of the symmetric fragment $\AAbarE$.

The following lemma establishes the soundness and completeness of the procedure \texttt{Oracle}.
%and also points out which is the information $V_{\A}(\bullet,\bullet)$ and $V_{\Abar}(\bullet,\bullet)$, two Boolean input arrays for \texttt{Oracle}, are supposed to provide in for-loops at lines 2--7.

\begin{lemma}\label{lemmaOracle}
Let $\Ku=(\mathpzc{AP},W, \delta,\mu,w_0)$ be a finite Kripke structure, $\psi$ be an $\AAbarB$ formula, and $V_{\A}(\bullet,\bullet)$ and $V_{\Abar}(\bullet,\bullet)$ be two Boolean arrays. Let us assume that 
\begin{compactenum}
	\item for each $\hsA \phi\in\mods(\psi)$ and $v'\in W$, $V_{\A}(\phi,v')=\top$ iff there exists $\rho\in\Trk_\Ku$ such that $\fst(\rho)=v'$ and $\Ku,\rho\models \phi$, and
	\item for each $\hsAt \phi\in\mods(\psi)$ and $v'\in W$, $V_{\Abar}(\phi,v')=\top$ iff there exists $\rho\in\Trk_\Ku$ such that $\lst(\rho)=v'$ and $\Ku,\rho\models \phi$.
\end{compactenum}
Then, \texttt{Oracle}$(\Ku,\psi,v,\textsc{direction},V_{\A}\cup V_{\Abar})$ features a successful computation (returning $\top$) iff:
\begin{compactitem}
	\item there exists $\rho\in\Trk_\Ku$ such that $\fst(\rho)=v$ and $\Ku,\rho\models \psi$, in the case \textsc{direction} is \textsc{forward};
	\item there exists $\rho\in\Trk_\Ku$ such that $\lst(\rho)=v$ and $\Ku,\rho\models \psi$, in the case \textsc{direction} is \textsc{backward}.
\end{compactitem}
\end{lemma}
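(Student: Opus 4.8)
The plan is to prove the biconditional by relating a successful computation of \texttt{Oracle} to the existence of a track from (or to) $v$ satisfying $\psi$, exploiting the polynomial-size model-track property quoted above. I would fix the \textsc{forward} case (the \textsc{backward} case being perfectly symmetric, obtained by running the construction on $\Ku$ with transposed edges and swapping the roles of $\fst$ and $\lst$). The central technical claim, from which the lemma follows immediately, is the following table-correctness invariant: after the bottom-up computation on a nondeterministically guessed track $\tilde\rho$, for every subformula $\varphi$ of $\psi$ and every index $i$ with $1\le i\le|\tilde\rho|$, the entry $T[\varphi,i]$ equals $\top$ if and only if $\Ku,\tilde\rho(1,i)\models\varphi$. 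Granting this invariant, line~23 returns $T[\psi,|\tilde\rho|]$, which by the invariant equals $\top$ iff $\Ku,\tilde\rho\models\psi$; so \texttt{Oracle} has a successful computation iff \emph{some} guessed track $\tilde\rho$ with $\fst(\tilde\rho)=v$ and $|\tilde\rho|\le |W|\cdot(2|\psi|+1)^2$ satisfies $\psi$, which by the model-track property is equivalent to the existence of an \emph{arbitrary} $\rho\in\Trk_\Ku$ with $\fst(\rho)=v$ and $\Ku,\rho\models\psi$.

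First I would establish the invariant by structural induction on $\varphi$, processing subformulas by increasing length exactly as the algorithm does, so that whenever $\varphi$ is handled all its proper subformulas already satisfy the invariant. The cases $\varphi=\neg\varphi_1$ and $\varphi=\varphi_1\vee\varphi_2$ are pointwise Boolean and immediate from the inductive hypothesis. For $\varphi=p$ the invariant reduces to the homogeneity principle: $p$ holds over $\tilde\rho(1,i)$ iff $p\in\mu(\tilde\rho(j))$ for all $j\le i$, which is exactly what the running conjunction $T[p,i]\gets T[p,i-1]\text{ and }p\in\mu(\tilde\rho(i))$ computes, with base case $T[p,1]\gets p\in\mu(\fst(\tilde\rho))$. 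The interesting propositional-modality case is $\varphi=\hsB\varphi_1$: by Definition~\ref{def:inducedmodel}, $\Ku,\tilde\rho(1,i)\models\hsB\varphi_1$ iff some \emph{proper} prefix $\tilde\rho(1,j)$ with $j<i$ satisfies $\varphi_1$; I would check that the recurrence $T[\varphi,i]\gets T[\varphi,i-1]\text{ or }T[\varphi_1,i-1]$, with $T[\varphi,1]\gets\bot$, accumulates precisely the disjunction $\bigvee_{j<i}T[\varphi_1,j]$, which by the inductive hypothesis on $\varphi_1$ equals the required condition.

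The step I expect to be the main obstacle is the treatment of the $\AAbar$-modal subformulas $\hsA\phi$ and $\hsAt\phi$ in $\mods(\psi)$, which are the boundary of the induction (they are never unfolded inside \texttt{Oracle}) and where the hypotheses on $V_{\A}$ and $V_{\Abar}$ are consumed. For $\hsA\phi$, the semantics gives $\Ku,\tilde\rho(1,i)\models\hsA\phi$ iff there is $\rho'$ with $\fst(\rho')=\lst(\tilde\rho(1,i))=\tilde\rho(i)$ and $\Ku,\rho'\models\phi$; by hypothesis~1 this holds iff $V_{\A}(\phi,\tilde\rho(i))=\top$, which is exactly the value written into $T[\hsA\phi,i]$ at lines~2--4. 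The delicate point is $\hsAt\phi$: here $\Ku,\tilde\rho(1,i)\models\hsAt\phi$ iff there is $\rho'$ with $\lst(\rho')=\fst(\tilde\rho(1,i))=\fst(\tilde\rho)$ and $\Ku,\rho'\models\phi$, so the witnessing state is $\fst(\tilde\rho)$ \emph{independently of $i$}; this is why the algorithm assigns $T[\hsAt\phi,i]\gets V_{\Abar}(\phi,\fst(\tilde\rho))$ with the same right-hand side for every $i$. I would argue this explicitly, noting that $\fst(\tilde\rho(1,i))=\fst(\tilde\rho)$ for all $i$, and invoke hypothesis~2; the subtlety worth flagging is precisely that the prefix relation preserves the first state, which is the semantic reason the $\hsAt$ entries are prefix-invariant and the induction can treat these subformulas as opaque leaves.

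Finally I would close the argument by combining the invariant with the model-track property to obtain both directions of the lemma. For soundness, a successful computation returns $T[\psi,|\tilde\rho|]=\top$ on some guessed $\tilde\rho$, so by the invariant $\Ku,\tilde\rho\models\psi$ and $\fst(\tilde\rho)=v$, giving the required witness directly. For completeness, if some $\rho\in\Trk_\Ku$ with $\fst(\rho)=v$ satisfies $\psi$, then since $\psi\in\AAbarB\subseteq\AAbarBBbar$ the model-track property yields a shorter witness $\rho'$ with $\fst(\rho')=v$ and $|\rho'|\le|W|\cdot(2|\psi|+1)^2$, which is guessable by \texttt{A\_track} at line~1; running the deterministic phase on $\rho'$ then returns $\top$ by the invariant, producing a successful computation.
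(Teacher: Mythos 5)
Your proof is correct and follows essentially the same route as the paper's: establish that $T[\varphi,i]=\top$ iff $\Ku,\tilde\rho(1,i)\models\varphi$ (by hypothesis for the $\AAbar$-modal subformulas, by bottom-up construction otherwise), then read off soundness directly and obtain completeness via the polynomial-size model-track property for $\AAbarBBbar$. The paper compresses the table-correctness invariant into an ``easy to check'' remark, whereas you carry out the structural induction explicitly, including the correct observations that homogeneity justifies the running conjunction for proposition letters and that prefixes preserve the first state, which is why the $\hsAt$ entries are index-independent.
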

\begin{proof}
 It is easy to check that if $\tilde{\rho}$ is the track non-deterministically generated by \texttt{A\_track} at line 1,  then, for $i=1,\cdots, |\tilde{\rho}|$, it holds that $\Ku,\tilde{\rho}(1,i)\models\phi\iff T[\phi,i]=\top$, either by hypothesis, when  $\phi$ occurs in  $\mods(\psi)$ (lines 2--7), or by construction, when $\phi$ does not occur in $\mods(\psi)$ (lines 8--22).

Let us now assume that the value of the parameter \textsc{direction} is \textsc{forward} (the proof for the other direction is analogous).

\begin{compactitem}
	\item[$(\Rightarrow)$] If \texttt{Oracle}$(\Ku,\psi,v,\textsc{forward},V_{\A}\cup V_{\Abar})$ features a successful computation, it means that there exists a track $\tilde{\rho} \in\Trk_\Ku$ (generated at line 1) such that $\fst(\tilde{\rho})=v$ and $T[\psi,|\tilde{\rho}|]=\top$. Hence $\Ku,\tilde{\rho}\models\psi$.

	\item[$(\Leftarrow)$] If there exists $\rho\in\Trk_\Ku$ such that $\fst(\rho)=v$ and $\Ku,\rho\models \psi$, as a result of Theorem~10 of \cite{bmmps16}, there exists $\tilde{\rho}\in\Trk_\Ku$ such that $\Ku,\tilde{\rho}\models \psi$, $\fst(\tilde{\rho})=\fst(\rho)$, and $|\tilde{\rho}|\leq
	%|W|\cdot (|\NNF(\psi)|+1)^2\leq 
	|W|\cdot (2|\psi|+1)^2$.
	%where $\NNF(\psi)$ is the negation normal form of $\psi$. 
	It follows that in some non-deterministic instance of \texttt{Oracle}$(\Ku,\psi,v,\textsc{forward},V_{\A}\cup V_{\Abar})$, $\texttt{A\_track}(\Ku,v,|W|\cdot(2|\psi|+1)^2,\textsc{forward})$ returns such $\tilde{\rho}$ (at line 1). Finally, we have that $T[\psi,|\tilde{\rho}|]=\top$ as $\Ku,\tilde{\rho}\models \psi$, hence the considered instance of \texttt{Oracle}$(\Ku,\psi,v,\textsc{forward},V_{\A}\cup V_{\Abar})$ is successful.\qedhere
\end{compactitem}
\end{proof}

The following theorem states soundness and completeness of the model checking procedure \texttt{MC}.

\begin{theorem}\label{MCsoundCompl}
Let $\Ku=(\mathpzc{AP},W, \delta,\mu,w_0)$ be a finite Kripke structure, $\psi$ be an $\AAbarB$ formula, and $V_{\A}(\bullet,\bullet)$ and $V_{\Abar}(\bullet,\bullet)$ be two Boolean arrays. If $\texttt{MC}(\Ku,\psi,\textsc{direction})$ is executed, then for all $v\in W$:
\begin{compactitem}
	\item if \textsc{direction} is \textsc{forward}, $V_{\A}(\psi,v)=\top$ iff there is $\rho\in\Trk_\Ku$ such that $\fst(\rho)=v$ and $\Ku,\rho\models \psi$;
	\item if \textsc{direction} is \textsc{backward}, $V_{\Abar}(\psi,v)=\top$ iff there is $\rho\in\Trk_\Ku$ such that $\lst(\rho)=v$ and $\Ku,\rho\!\models\!\psi$.
\end{compactitem}
\end{theorem}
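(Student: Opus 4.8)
The plan is to prove the statement by induction on the nesting depth $d$ of the modalities $\hsA$ and $\hsAt$ in $\psi$, using Lemma~\ref{lemmaOracle} as the engine at each individual level. The key observation is that the recursive calls at lines 1--4 of \texttt{MC} serve precisely to establish hypotheses 1 and 2 of Lemma~\ref{lemmaOracle} for the $\AAbar$-modal subformulas of $\psi$; once those hold, the oracle invocations at lines 5--9 correctly compute the entries $V_\A(\psi,\cdot)$ or $V_{\Abar}(\psi,\cdot)$.

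For the base case ($d=0$), $\psi$ contains no occurrence of $\hsA$ or $\hsAt$, hence $\mods(\psi)=\emptyset$ and the loops at lines 1--4 are empty. The hypotheses of Lemma~\ref{lemmaOracle} then hold vacuously, so for each $v\in W$ the call $\texttt{Oracle}(\Ku,\psi,v,\textsc{direction},V_\A\cup V_{\Abar})$ admits a successful computation iff a track starting from (resp.\ leading to) $v$ and satisfying $\psi$ exists. Since $Success(\cdot)$ returns $\top$ exactly when some nondeterministic run of the oracle returns $\top$, the assignment at line 7 (resp.\ line 9) sets $V_\A(\psi,v)$ (resp.\ $V_{\Abar}(\psi,v)$) to the correct value.

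For the inductive step, I would note that every $\hsA\phi\in\mods(\psi)$ and every $\hsAt\phi\in\mods(\psi)$, being a top-level modal subformula (not in the scope of any $\hsA$ or $\hsAt$), has $\phi$ of nesting depth at most $d-1$; hence the inductive hypothesis applies to the recursive calls $\texttt{MC}(\Ku,\phi,\textsc{forward})$ and $\texttt{MC}(\Ku,\phi,\textsc{backward})$ issued at lines 2 and 4. After these calls terminate, $V_\A(\phi,v')=\top$ iff some track from $v'$ satisfies $\phi$, and symmetrically $V_{\Abar}(\phi,v')=\top$ iff some track leading to $v'$ satisfies $\phi$ --- which are exactly conditions 1 and 2 of Lemma~\ref{lemmaOracle} for the formulas in $\mods(\psi)$. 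Applying the lemma to each state $v$ then yields the desired equivalence for $\psi$, and the $Success$ wrapper transfers it to the table entries just as in the base case.

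The only delicate points I expect are bookkeeping rather than conceptual. First, I must justify that the recursive calls populate the global arrays for all the modal subformulas the oracle reads, without overwriting previously computed (correct) entries: this follows because distinct subformulas index distinct cells of $V_\A$ and $V_{\Abar}$, and because the recursion proceeds strictly by increasing nesting depth, so a cell, once set correctly, is never recomputed. Second, I need the notion of nesting depth to be precise enough that ``$\hsA\phi\in\mods(\psi)$ implies $\phi$ has depth $\le d-1$'' is immediate; defining $d$ as the maximum number of nested $\hsA$/$\hsAt$ occurrences along any path in the syntax tree of $\psi$ makes this transparent. Everything else is a direct appeal to Lemma~\ref{lemmaOracle}.
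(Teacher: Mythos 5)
Your proof is correct and follows essentially the same route as the paper's: an induction whose base case has $\mods(\psi)=\emptyset$ so that the hypotheses of Lemma~\ref{lemmaOracle} hold vacuously, and whose inductive step uses the recursive calls at lines 1--4 to establish conditions 1 and 2 of that lemma before invoking it for each state. The only (immaterial) difference is the induction measure --- you use the nesting depth of $\hsA$/$\hsAt$, while the paper counts the total number of their occurrences; both decrease strictly across the recursive calls, so either works.
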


\begin{proof}
The proof is by induction on the number $n$ of occurrences of $\hsA$ and $\hsAt$ modalities in $\psi$.
\newline
(Base case: $n=0$) Since $\mods(\psi)=\emptyset$, conditions 1 and 2 of Lemma \ref{lemmaOracle} are satisfied and the thesis trivially holds.
\newline
(Inductive case: $n>0$) The formula $\psi$ contains at least an $\hsA$ or an $\hsAt$ modality. Hence $\mods(\psi)\allowbreak \neq \emptyset$. Since each recursive call to \texttt{MC} (either at line 2 or 4) is performed on a formula $\phi$ featuring a number of occurrences of $\hsA$ and $\hsAt$ which is strictly less than the number of their occurrences in $\psi$, we can apply the inductive hypothesis. As a consequence, when the control flow reaches line 5, it holds that:
\begin{compactenum}
	\item for each $\hsA \phi\in\mods(\psi)$ and $v'\in W$, $V_{\A}(\phi,v')=\top$ iff there exists $\rho\in\Trk_\Ku$ such that $\fst(\rho)=v'$ and $\Ku,\rho\models \phi$;
	\item for each $\hsAt \phi\in\mods(\psi)$ and $v'\in W$, $V_{\Abar}(\phi,v')=\top$ iff there exists $\rho\in\Trk_\Ku$ such that $\lst(\rho)=v'$ and $\Ku,\rho\models \phi$.
\end{compactenum}
This implies that conditions 1 and 2 of Lemma~\ref{lemmaOracle} are fulfilled. Hence (assuming that \textsc{direction} is \textsc{forward}), it holds that, for $v\in W$, $V_{\A}(\psi,v)=\top$ iff there exists $\rho\in\Trk_\Ku$ such that $\fst(\rho)=v$ and $\Ku,\rho\models \psi$. The case for  \textsc{direction} $=$  \textsc{backward} is symmetric, and thus omitted.
\end{proof}

As an immediate consequence we have that the procedure  \texttt{MC} solves the model checking problem for $\AAbarB$ with an  algorithm belonging to the complexity class $\PTIME^{\NP}$.

\begin{corollary}
Let $\Ku=(\mathpzc{AP},W, \delta,\mu,w_0)$ be a finite Kripke structure and $\psi$ be an $\AAbarB$ formula. If $\texttt{MC}(\Ku,\neg\psi,\textsc{forward})$ is executed, then $V_{\A}(\neg\psi,w_0)=\bot \iff \Ku\models \psi$.
\end{corollary}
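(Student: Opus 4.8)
The plan is to read the corollary off directly from Theorem~\ref{MCsoundCompl}, which already carries all the substantive content. I would apply that theorem with the formula $\neg\psi$ in place of $\psi$, with the state $v=w_0$, and with \textsc{direction} equal to \textsc{forward}, matching exactly the invocation $\texttt{MC}(\Ku,\neg\psi,\textsc{forward})$ assumed in the hypothesis. The forward clause then specialises, at the single state $w_0$, to the equivalence $V_{\A}(\neg\psi,w_0)=\top$ iff there exists $\rho\in\Trk_\Ku$ with $\fst(\rho)=w_0$ and $\Ku,\rho\models\neg\psi$.

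Next I would rewrite the right-hand side using three definitions. By the definition of \emph{initial track}, a track $\rho$ with $\fst(\rho)=w_0$ is precisely an initial track of $\Ku$; by the semantics of negation, $\Ku,\rho\models\neg\psi$ is the same as $\Ku,\rho\not\models\psi$; and by the definition of $\Ku\models\psi$ as ``$\Ku,\rho'\models\psi$ for \emph{every} initial track $\rho'$'', the existence of an initial track on which $\psi$ fails is exactly $\Ku\not\models\psi$. Chaining these three rewritings yields $V_{\A}(\neg\psi,w_0)=\top \iff \Ku\not\models\psi$.

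Finally I would take the contrapositive. Since $V_{\A}$ is a Boolean array, $V_{\A}(\neg\psi,w_0)=\bot$ is equivalent to $V_{\A}(\neg\psi,w_0)\neq\top$, so negating both sides of the last equivalence gives $V_{\A}(\neg\psi,w_0)=\bot \iff \Ku\models\psi$, which is the claim. I expect no real obstacle in this argument: the only points needing care are the faithful alignment of the theorem's \textsc{forward} clause with the specific call in the hypothesis and the two-valuedness of $V_{\A}$ (so that ``$\neq\top$'' and ``$=\bot$'' coincide). All the genuine difficulty---the polynomial bound on track length from the model-track property of \cite{bmmps16} and the inductive correctness of $\texttt{MC}$ over the nesting of $\hsA$ and $\hsAt$---has already been absorbed into Lemma~\ref{lemmaOracle} and Theorem~\ref{MCsoundCompl}.
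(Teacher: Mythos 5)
Your argument is correct and is exactly the intended one: the paper presents this corollary as an immediate consequence of Theorem~\ref{MCsoundCompl} without spelling out the details, and your instantiation with $\neg\psi$, $v=w_0$, and \textsc{forward}, followed by unfolding the definitions of initial track, negation, and $\Ku\models\psi$, and then negating both sides, is precisely that omitted argument.
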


\begin{corollary}
The model checking problem for $\AAbarB$ formulas over finite Kripke structures is in $\PTIME^{\NP}$.
\end{corollary}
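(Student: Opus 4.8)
The plan is to combine the correctness already established by Theorem~\ref{MCsoundCompl} and the preceding corollary with a complexity analysis of the two procedures \texttt{MC} and \texttt{Oracle}. The preceding corollary reduces the model checking problem $\Ku\models\psi$ to testing whether $V_{\A}(\neg\psi,w_0)=\bot$ after the execution of $\texttt{MC}(\Ku,\neg\psi,\textsc{forward})$; hence it suffices to show that this execution can be carried out by a deterministic polynomial-time Turing machine equipped with an $\NP$ oracle.

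First I would show that \texttt{Oracle} witnesses a language in $\NP$. On input $(\Ku,\psi,v,\textsc{direction},V_{\A}\cup V_{\Abar})$ it non-deterministically guesses a track $\tilde{\rho}$ of length at most $|W|\cdot(2|\psi|+1)^2$ (line 1), which is polynomial in the input size, and then performs the bottom-up evaluation of the table $T$ (lines 8--22), visiting each of the $O(|\psi|)$ subformulas and each of the $O(|\tilde{\rho}|)$ prefix indices a constant number of times. Thus every computation branch runs in deterministic polynomial time, and $Success(\texttt{Oracle}(\cdots))$---the existence of an accepting branch---is precisely a membership query for a fixed $\NP$ language $L$.

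Next I would bound the deterministic work of \texttt{MC} and the number of oracle queries it issues. The recursion of \texttt{MC} follows the nesting of the $\hsA$ and $\hsAt$ modalities: by the definition of $\mods$, each occurrence of $\hsA$ or $\hsAt$ in $\neg\psi$ is ``unpacked'' exactly once, namely at the level where it is the outermost such modality, so the number of distinct invocations of \texttt{MC} is bounded by $1+|\psi|$. Within each invocation the only steps that are not plain table writes are the $|W|$ oracle calls of lines 5--9. Hence the total number of $\NP$ queries is at most $(1+|\psi|)\cdot|W|$, which is polynomial, and the deterministic overhead (driving the recursion and maintaining the polynomial-size vectors $V_{\A}$ and $V_{\Abar}$) is likewise polynomial. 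Combining the two analyses, $\texttt{MC}(\Ku,\neg\psi,\textsc{forward})$ is computed by a deterministic polynomial-time machine making polynomially many (adaptive) queries to $L\in\NP$, which places the problem in $\PTIME^{\NP}$.

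The analysis is essentially routine once the correctness lemmas are in hand; the one point I would check with some care is that the data handed to the oracle---in particular the tables $V_{\A}$ and $V_{\Abar}$---stays of polynomial size throughout the recursion, so that each query is a genuine polynomially-bounded $\NP$ query and no super-polynomial information is smuggled across the oracle interface.
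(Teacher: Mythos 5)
Your proposal is correct and follows essentially the same route as the paper's proof: it bounds the number of recursive invocations of \texttt{MC} by the number of $\hsA$/$\hsAt$ occurrences, counts $|W|$ oracle calls per invocation, and observes that \texttt{Oracle} is a nondeterministic polynomial-time procedure (guess a track of length at most $|W|\cdot(2|\psi|+1)^2$, then evaluate the table $T$ deterministically), so each call is a query to a fixed $\NP$ language. Your extra remark about the polynomial size of the vectors $V_{\A}$ and $V_{\Abar}$ passed to the oracle is a sensible sanity check, but it does not change the argument.
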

\begin{proof}
Given a finite Kripke structure $\Ku=(\mathpzc{AP},W, \delta,\mu,w_0)$ and an $\AAbarB$ formula $\psi$, the number of recursive calls performed by $\texttt{MC}(\Ku,\neg\psi,\textsc{forward})$ is at most $|\psi|$. Each one costs $O(|\psi|+|W|\cdot (|\Ku|+|\psi|+|\psi|\cdot |W|))$, where the first addend comes from searching $\psi$ for its modal subformulas (lines 1--4), and the second one from the preparation of the input for the oracle call, for each $v\in W$ (lines 5--9). Therefore its (deterministic) complexity is $O(|\psi|^2\cdot |\Ku|^2)$. 
As for \texttt{Oracle}$(\Ku,\psi,v,\textsc{direction},V_{\A}\cup V_{\Abar})$, its (non-deterministic) complexity is $O(|\psi|^3\cdot |\Ku|)$, where $|\psi|$ is a bound to the number of subformulas and $O(|\psi|^2 \cdot |\Ku|)$ is the number of steps necessary to generate and check $\tilde{\rho}$.
\end{proof}

Symmetrically, by easily adapting the procedure \texttt{Oracle}, it is straightforward to prove that the model checking problem for $\AAbarE$ formulas is in $\PTIME^{\NP}$ as well.

\section{$\PTIME^{\NP}$-hardness of model checking for $\AB$ formulas}\label{sec:ABhard}

In this section, we prove that model checking for $\AB$ (and $\AbarE$) formulas is hard for $\PTIME^{\NP}$ by reducing the $\PTIME^{\NP}$-complete problem SNSAT (Sequentially Nested SATisfiability), a logical problem with nested satisfiability questions~\cite{LMS01}, to it. SNSAT is defined as follows.
%We start by introducing SNSAT.

\begin{definition}\label{snsat}
An instance $\mathcal{I}$ of SNSAT consists of a set of Boolean variables $X=\{x_1,\cdots ,x_n\}$ and a set of Boolean formulas $\{F_1(Z_1), F_2(x_1,Z_2),\cdots , F_n(x_1,\cdots , x_{n-1}, Z_n)\}$, where, for $i\!=\!1,\cdots , n$, $F_i(x_1,\! \cdots\! ,x_{i-1},Z_i)$ features variables in $\{x_1, \cdots ,x_{i-1}\}$ and in $Z_i=\{z_i^1,\cdots ,z_i^{j_i}\}$, the latter being a set of variables local to $F_i$, that is, $Z_i\cap Z_j=\emptyset$, for $j \neq i$, and $X\cap Z_i=\emptyset$. We denote $|X| (=n)$ by $|\mathcal{I}|$.
Let $v_\mathcal{I}$ be the valuation of the variables in $X$ defined as follows: $v_\mathcal{I}(x_i)=\top \iff F_i(v_\mathcal{I}(x_1), \cdots, v_\mathcal{I}(x_{i-1}), Z_i)$ is satisfiable (by assigning suitable values to the local variables $z_i^1,\cdots ,z_i^{j_i}\in Z_i$).
SNSAT is the problem of deciding, given an instance $\mathcal{I}$, with $|\mathcal{I}|=n$, whether
 $v_\mathcal{I}(x_n)=\top$. In such a case, we say that $\mathcal{I}$ is a positive instance of SNSAT.
\end{definition}

Given an SNSAT instance $\mathcal{I}$, with $|\mathcal{I}|=n$, the valuation $v_\mathcal{I}$ is unique and it can be easily computed by a $\PTIME^{\NP}$ algorithm as follows. A first query to a SAT oracle determines whether $v_\mathcal{I}(x_1)$ is $\top$ or $\bot$, since $v_\mathcal{I}(x_1)=\top$ iff $F_1(Z_1)$ is satisfiable. Then, we replace $x_1$ by the value $v_\mathcal{I}(x_1)$ in $F_2(x_1,Z_2)$ and another query to the SAT oracle is performed to determine whether $F_2(v_\mathcal{I}(x_1),Z_2)$ is satisfiable, gaining the value of $v_\mathcal{I}(x_2)$. This step is iterated other $n-2$ times, until the value for $v_\mathcal{I}(x_n)$ is obtained. 
%
%Thus SNSAT is in $\PTIME^{\NP}$.

Let $\mathcal{I}$ be an instance of SNSAT, with $|\mathcal{I}|=n$. We now show how to build a finite Kripke structure $\Ku_\mathcal{I}$ and an $\AB$ formula $\Phi_\mathcal{I}$, by using \emph{logarithmic working space}, such that $\mathcal{I}$ is a positive instance of SNSAT if and only if $\Ku_\mathcal{I}\models \Phi_\mathcal{I}$. Such a reduction is inspired by similar constructions from \cite{LMS01}.

Let $Z=\bigcup_{i=1}^n Z_i$ and let $R=\{r_i \mid i=1,\cdots , n\}$ and $R_i=R\setminus \{r_i\}$ be $n+1$ sets of auxiliary variables.
The Kripke structure $\Ku_\mathcal{I}$ consists of a suitable composition of $n$ instances of a \emph{gadget} (an instance  for each variable $x_1,\cdots , x_n\in X$). The structure of the gadget for $x_i$, with $1\leq i\leq n$, is shown in Figure~\ref{gadget}, assuming that the labeling of states (nodes) is defined as follows:
\begin{compactitem}
	\item $\mu(w_{x_i})=X \cup Z \cup \{s,t\} \cup R_i$, and
    	$\mu(\overline{w_{x_i}})=(X\setminus \{x_i\}) \cup Z \cup \{s,t\} \cup R_i \cup \{p_{\overline{x_i}}\}$;
	\item for $u_i=1,\cdots , j_i$, $\mu(w_{z_i^{u_i}})=X \cup Z \cup \{s,t\} \cup R_i$, and
	    $\mu(\overline{w_{z_i^{u_i}}})=X \cup (Z\setminus \{z_i^{u_i}\}) \cup \{s,t\} \cup R_i$;
	\item $\mu(\overline{s_i})=X \cup Z \cup \{t\} \cup R_i$.
\end{compactitem}

\begin{figure}[t]
 \begin{minipage}[b]{0.32\textwidth}
   \centering
   \scalebox{0.8}{\begin{tikzpicture}[every node/.style={circle, draw, inner sep=1pt}]

\node (v3) at (-2,0.5) {$w_{x_i}$};
\node (v2) at (0,0.5) {$\overline{s_i}$};
\node (v1) at (2,0.5) {$\overline{w_{x_i}}$};
\draw [->] (v1) edge (v2);
\draw [->] (v2) edge (v3);
\draw [dashed] (-2.5,6) rectangle (2.5,1.5);

\draw [use as bounding box,draw=none] (-2.5,6) rectangle (2.5,0.2);

\node (v4) at (-2,2) {$w_{z_i^1}$};
\node (v6) at (2,2) {$\overline{w_{z_i^1}}$};
\node (v5) at (-2,3.5) {$w_{z_i^2}$};
\node (v7) at (2,3.5) {$\overline{w_{z_i^2}}$};
\node (v8) at (-2,5.5) {$w_{z_i^{j_i}}$};
\node (v9) at (2,5.5) {$\overline{w_{z_i^{j_i}}}$};
\draw [->] (v4) edge (v5);
\draw [->] (v6) edge (v7);
\draw [->] (v4) edge (v7);
\draw [->] (v6) edge (v5);

\draw [->,dotted] (v5) edge (v8);
\draw [->,dotted] (v7) edge (v9);
\draw [->,dotted] (v7) edge (v8);
\draw [->,dotted] (v5) edge (v9);
\draw [->] (v3) edge (v4);
\draw [->] (v3) edge (v6);
\draw [->] (v1) edge (v4);
\draw [->] (v1) edge (v6);

\node [draw=none,rotate=-90] (ch) at (2.8,3.5) {choice $Z_i$};
\end{tikzpicture}}
   \caption{The gadget for $x_i$.}\label{gadget}
 \end{minipage}
 \quad
 \begin{minipage}[b]{0.64\textwidth}
   \centering
   \rotatebox{-90}{\scalebox{0.8}{\begin{tikzpicture}[every node/.style={circle, draw, inner sep=1pt}]

\node [rotate=90] (v3) at (-2,0.5) {$w_{x_1}$};
\node [rotate=90] (v2) at (0,0.5) {$\overline{s_1}$};
\node [rotate=90] (v1) at (2,0.5) {$\overline{w_{x_1}}$};
\draw [->] (v1) edge (v2);
\draw [->] (v2) edge (v3);
\draw  (-2.5,3) rectangle (2.5,1.5) node[midway,draw=none,rotate=30] {choice $Z_1$};;

\node [draw=none] (v4) at (-2,1.5) {};
\node [draw=none] (v6) at (2,1.5) {};

\draw [->] (v3) edge (v4);
\draw [->] (v1) edge (v6);

%%%%%
\node [double,rotate=90]  (v13) at (-2,-8) {$w_{x_n}$};
\node [rotate=90] (v12) at (0,-8) {$\overline{s_n}$};
\node [rotate=90] (v11) at (2,-8) {$\overline{w_{x_n}}$};
\draw [->] (v11) edge (v12);
\draw [->] (v12) edge (v13);
\draw  (-2.5,-5.5) rectangle (2.5,-7) node[midway,draw=none,rotate=30] {choice $Z_n$};;

\node [draw=none] (v14) at (-2,-7) {};
\node [draw=none] (v16) at (2,-7) {};

\draw [->] (v13) edge (v14);
\draw [->] (v11) edge (v16);

%%%%
\node [rotate=90] (v23) at (-2,-3.5) {$w_{x_2}$};
\node [rotate=90] (v22) at (0,-3.5) {$\overline{s_2}$};
\node [rotate=90] (v21) at (2,-3.5) {$\overline{w_{x_2}}$};
\draw [->] (v21) edge (v22);
\draw [->] (v22) edge (v23);
\draw  (-2.5,-1) rectangle (2.5,-2.5) node[midway,draw=none,rotate=30] {choice $Z_2$};;

\node [draw=none] (v24) at (-2,-2.5) {};
\node [draw=none] (v26) at (2,-2.5) {};

\draw [->] (v23) edge (v24);
\draw [->] (v21) edge (v26);

\node [draw=none] (v8) at (-2,-5.5) {};
\node [draw=none] (v18) at (2,-5.5) {};
\node [draw=none] (v29) at (2,-1) {};
\node [draw=none] (v19) at (-2,-1) {};
\draw [->, dotted] (v8) edge (v23);
\draw [->, dotted] (v18) edge (v21);
\draw [->] (v19) edge (v3);
\draw [->] (v29) edge (v1);

\node [rotate=90] (v9) at (0,3.5) {$s_0$};
\node [draw=none] (v30) at (-2,3) {};
\draw [->,looseness=5] (v9.south east) edge (v9.north east);

\node [draw=none] (v10) at (2,3) {};

\draw [->] (v19) edge (v1);
\draw [->] (v29) edge (v3);
\draw [->,dotted] (v18) edge (v23);
\draw [->,dotted] (v8) edge (v21);
\draw [->] (v30) edge (v9);
\draw [->] (v10) edge (v9);
\draw [->] (v13) edge (v16);
\draw [->] (v11) edge (v14);
\draw [->] (v23) edge (v26);
\draw [->] (v21) edge (v24);
\draw [->] (v3) edge (v6);
\draw [->] (v1) edge (v4);

\draw [dashed] (-2.7,3.2) rectangle (2.7,0);
\draw [dashed] (-2.7,-0.8) rectangle (2.7,-4);
\draw [dashed] (-2.7,-5.3) rectangle (2.7,-8.5);

%%%%%%%%%
%\node [double] (v33) at (-2,-9) {$y$};
%\draw [->] (v33) edge (v13);
\end{tikzpicture}}}
   \caption{Kripke structure $\Ku_\mathcal{I}$ associated with an SNSAT instance $\mathcal{I}$, with $|\mathcal{I}|=n$. Notice that the states $\overline{s_n}$ and $\overline{w_{x_n}}$ are unreachable.}\label{fullKripke}
 \end{minipage}
\end{figure}

The Kripke structure  $\Ku_\mathcal{I}$ is obtained by sequentializing (adding suitable arcs) the $n$ instances of the gadget (in reverse order, from $x_n$ to $x_1$), adding a collector terminal state $s_0$, with labeling $\mu(s_0)=X \cup Z \cup \{s\} \cup R$, and setting $w_{x_n}$ as the initial state. The overall construction is reported in Figure~\ref{fullKripke}. Formally, $\Ku_\mathcal{I}=(X \cup Z \cup \{s,t\} \cup R\cup\{p_{\overline{x_i}}\mid i=1,\cdots ,n\}, W,\delta,\mu,w_{x_n})$. $\Ku_\mathcal{I}$ enjoys the following properties:
%\begin{compactitem}
	$(i)$ any track satisfying $s$ does not pass through any $\overline{s_i}$, for $1\leq i\leq n$;
	$(ii)$ any track \emph{not} satisfying $t$ has $s_0$ as its last state;
	$(iii)$ any track \emph{not} satisfying $r_i$ passes through some state of the $i$-th gadget, for $1\leq i\leq n$;
	$(iv)$ the only track satisfying $p_{\overline{x_i}}$ is $\overline{w_{x_i}}$ (notice that $|\overline{w_{x_i}}|=1$), for $1\leq i\leq n$.
%\end{compactitem}

A track $\rho\in\Trk_{\Ku_\mathcal{I}}$ \emph{induces} a truth assignment of all the proposition letters, denoted by $\omega_\rho$, which is defined as $\omega_\rho(y)=\top \iff \Ku_\mathcal{I},\rho\models y$, for any letter $y$. 
In the following, we will write $\omega_\rho(Z_i)$ for $\omega_\rho(z_i^1),\cdots , \omega_\rho(z_i^{j_i})$.
In particular, if $\rho$ starts from some state $w_{x_i}$ or $\overline{w_{x_i}}$, and satisfies $s\wedge \neg t$ (that is, it reaches the collector state $s_0$ without visiting any node $\overline{s_j}$, for $1\leq j \leq i$), $\omega_\rho$ fulfills the following conditions: 
for $1\leq m\leq i$,
\begin{compactitem}
	\item if $w_{x_m}\in\states(\rho)$, then $\omega_\rho(x_m)=\top$, and if $\overline{w_{x_m}}\in\states(\rho)$, then $\omega_\rho(x_m)=\bot$;
	\item for $1\leq u_m\leq j_m$, if $w_{z_m^{u_m}}\in\states(\rho)$, then $\omega_\rho(z_m^{u_m})=\top$, and if $\overline{w_{z_m^{u_m}}}\in\states(\rho)$, then $\omega_\rho(z_m^{u_m})=\bot$;
\end{compactitem}	
It immediately follows that $\Ku_\mathcal{I},\rho\models F_m(x_1,\cdots , x_{m-1},Z_m)$ iff 
$F_m(\omega_\rho(x_1),\cdots , \omega_\rho(x_{m-1}),\omega_\rho(Z_m)) = \top$.
Finally, let $\mathcal{F}_\mathcal{I}=\{\psi_k \mid 0\leq k\leq n+1\}$ be the set of formulas defined as: $\psi_0=\bot$ and, for $k\geq 1$,
\begin{equation*}
\psi_k = \hsA \underbrace{\left[\begin{array}{c}
(s\wedge \neg t) \wedge \bigwedge_{i=1}^n \Big((x_i\wedge \neg r_i)\rightarrow F_i(x_1, \cdots, x_{i-1}, Z_i)\Big) \\ 
\wedge \\ 
\mathopen[B\mathclose] \Big( (\bigvee_{i=1}^n \hsA p_{\overline{x_i}})\rightarrow \hsA\big(\neg s \wedge \ell_{=2}\wedge \hsA (\ell_{=2}\wedge \neg\psi_{k-1})\big)\Big)
\end{array}\right]}_{\text{\normalsize $\varphi_k$}} ,
\end{equation*}
where $\ell_{=2}=\hsB \top \wedge [B][B]\bot$ is satisfied only by tracks of length 2.
The first conjunct of $\varphi_k$ ($s\wedge \neg t$) forces the track to reach the collector state $s_0$, without visiting any state $\overline{s_j}$. The second conjunct checks that if the track assigns the truth value $\top$ to $x_m$ passing through $w_{x_m}$ (with $1\leq m \leq n$), then $F_m(x_1,\cdots,x_{m-1},Z_m)$ is satisfied by $\omega_\rho$ (which amounts to say that the SAT problem connected with $Z_m$ has a positive answer, for the selected values of $x_1,\cdots,x_{m-1}$).
Conversely, the third conjunct ensures that if the track assigns the truth value $\bot$ to some $x_m$ by passing through $\overline{w_{x_m}}$, then, intuitively, the SAT problem connected with $Z_m$ has no assignment satisfying $F_m(x_1,\cdots,x_{m-1},Z_m)$.
As a matter of fact,
if $\rho$ satisfies $\varphi_k$ for some $k\geq 2$, and assigns $\bot$ to $x_m$, then there is a prefix $\tilde{\rho}$ of $\rho$ ending in $\overline{w_{x_m}}$. Since  $\bigvee_{i=1}^n\hsA p_{\overline{x_i}}$ is satisfied by $\tilde{\rho}$, then  $\hsA\big(\neg s \wedge \ell_{=2}\wedge \hsA (\ell_{=2}\wedge \neg\psi_{k-1})\big)$ must be satisfied as well. The only possibility is that the track $\overline{s_m}\cdot w_{x_m}$ does not model $\psi_{k-1}$ (as $\overline{w_{x_m}}\cdot \overline{s_m}$ has to model $\hsA (\ell_{=2}\wedge \neg\psi_{k-1})$). However, since $\psi_{k-1}=\hsA\varphi_{k-1}$, this holds iff $\Ku, w_{x_m}\not \models\psi_{k-1}$.

The following theorem states the correctness of the construction.

\begin{theorem}\label{thcorr} Let $\mathcal{I}$ be an instance of SNSAT with $|\mathcal{I}|=n$, and let $\Ku_\mathcal{I}$ and $\mathcal{F}_\mathcal{I}$
%$=\{\psi_k \mid 0\leq k\leq n+1\}$ 
be defined as above. For all $0\leq k\leq n+1$ and all $r=1,\cdots , n$, it holds that:
	\begin{compactenum}
		\item if $k\geq r$, then $v_\mathcal{I}(x_r)=\top \iff \Ku_\mathcal{I},w_{x_r}\models \psi_k$;
		\item if $k\geq r+1$, then $v_\mathcal{I}(x_r)=\bot \iff \Ku_\mathcal{I},\overline{w_{x_r}}\models \psi_k$.
	\end{compactenum}
\end{theorem}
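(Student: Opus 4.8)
The plan is to prove both equivalences simultaneously by induction on $k$, using at each step the statement for $k-1$ to handle the innermost copies of $\psi_{k-1}$. Before the induction I would fix the semantic reading of the three conjuncts of $\varphi_k$ on a track $\rho$ issuing from $w_{x_r}$ (resp.\ $\overline{w_{x_r}}$). By properties $(i)$--$(ii)$, any $\rho$ satisfying the first conjunct $s\wedge\neg t$ runs from its initial gadget up to the collector $s_0$, entering each gadget $m\leq r$ exactly once through either $w_{x_m}$ or $\overline{w_{x_m}}$ and selecting a truth pattern for $Z_m$; such a $\rho$ therefore realises the assignment $\omega_\rho$, and by the remark preceding the theorem $\Ku_\mathcal{I},\rho\models F_m \iff F_m(\omega_\rho(x_1),\dots,\omega_\rho(x_{m-1}),\omega_\rho(Z_m))=\top$. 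The second conjunct then reads: for every $m\leq r$ with $\omega_\rho(x_m)=\top$ (so $\rho$ visits $w_{x_m}$, hence falsifies $r_m$), $F_m$ is satisfied by $\omega_\rho$. Using $(iv)$, the antecedent $\bigvee_i\hsA p_{\overline{x_i}}$ of the third conjunct holds on a proper prefix $\tilde\rho$ iff $\lst(\tilde\rho)=\overline{w_{x_m}}$ for some $m$; pushing the pattern $\hsA(\neg s\wedge\ell_{=2}\wedge\hsA(\ell_{=2}\wedge\neg\psi_{k-1}))$ through the only admissible hops $\overline{w_{x_m}}\to\overline{s_m}$ and $\overline{s_m}\to w_{x_m}$ forces $\overline{s_m}\cdot w_{x_m}\not\models\psi_{k-1}$. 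The point I would isolate here is that, as $\psi_{k-1}=\hsA\varphi_{k-1}$ depends only on the last state of its track, $\overline{s_m}\cdot w_{x_m}\models\psi_{k-1}\iff w_{x_m}\models\psi_{k-1}$, so the third conjunct asserts exactly that $w_{x_m}\not\models\psi_{k-1}$ at every $m$ where $\rho$ sets $x_m=\bot$.

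The base case $k=0$ is vacuous, since $\psi_0=\bot$ leaves both side conditions $k\geq r\geq 1$ and $k\geq r+1\geq 2$ unsatisfiable. For the forward directions of the step I build a witness. If $v_\mathcal{I}(x_r)=\top$, I route $\rho$ from $w_{x_r}$ so that it passes through $w_{x_m}$ or $\overline{w_{x_m}}$ according to $v_\mathcal{I}(x_m)$ for each $m<r$, choosing in $Z_r$ and in every $Z_m$ with $v_\mathcal{I}(x_m)=\top$ a pattern satisfying $F_m$ (available precisely because $v_\mathcal{I}(x_m)=\top$ means $F_m$ is satisfiable under $v_\mathcal{I}(x_1),\dots,v_\mathcal{I}(x_{m-1})$), and then enters $s_0$. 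The second conjunct holds by construction; the third holds because each prefix ending in some $\overline{w_{x_m}}$ has $v_\mathcal{I}(x_m)=\bot$, so the first equivalence at $k-1$ (applicable since $m<r\leq k$ gives $k-1\geq m$) yields $w_{x_m}\not\models\psi_{k-1}$. The witness for the second equivalence is identical but starts in $\overline{w_{x_r}}$, the additional length-one prefix $\overline{w_{x_r}}$ demanding $w_{x_r}\not\models\psi_{k-1}$, which follows from $v_\mathcal{I}(x_r)=\bot$ and the first equivalence at $k-1$ (here $k\geq r+1$ gives $k-1\geq r$).

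For the converse directions I read off information from an arbitrary witness $\rho$. The second equivalence is immediate: the length-one prefix $\overline{w_{x_r}}$ always fires the third conjunct, forcing $w_{x_r}\not\models\psi_{k-1}$ and hence $v_\mathcal{I}(x_r)=\bot$ by the first equivalence at $k-1$. For the first equivalence I prove by an inner induction on $m=1,\dots,r$ that $\omega_\rho(x_m)=v_\mathcal{I}(x_m)$: where $\rho$ sets $x_m=\bot$ the third conjunct gives $w_{x_m}\not\models\psi_{k-1}$, so $v_\mathcal{I}(x_m)=\bot$; where $\rho$ sets $x_m=\top$ the second conjunct gives $F_m(\omega_\rho)=\top$, and substituting $v_\mathcal{I}(x_1),\dots,v_\mathcal{I}(x_{m-1})$ for $\omega_\rho(x_1),\dots,\omega_\rho(x_{m-1})$ through the inner hypothesis exhibits $F_m$ as satisfiable under $v_\mathcal{I}$, i.e.\ $v_\mathcal{I}(x_m)=\top$. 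Instantiating at $m=r$, where $\omega_\rho(x_r)=\top$ because $\rho$ starts in $w_{x_r}$, yields $F_r$ satisfiable under $v_\mathcal{I}$, that is $v_\mathcal{I}(x_r)=\top$.

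I expect the main obstacle to be the fine analysis of the third conjunct: checking that only prefixes ending in $\overline{w_{x_m}}$ fire its antecedent (so the local-variable nodes stay inert), that the nested $\hsA$/$\ell_{=2}$ block admits exactly the walk $\overline{w_{x_m}}\to\overline{s_m}\to w_{x_m}$, and above all that the reduction $\overline{s_m}\cdot w_{x_m}\models\psi_{k-1}\iff w_{x_m}\models\psi_{k-1}$ is sound. Keeping the index inequalities ($k\geq r$, $k\geq r+1$, and $k-1\geq m$) aligned so that the first equivalence at $k-1$ is always available is the remaining delicate point.
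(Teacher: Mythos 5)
Your proposal is correct and follows essentially the same route as the paper's proof: induction on $k$ with the base case $k=0$ vacuous, an explicit witness track routed according to $v_\mathcal{I}$ for the direction from $v_\mathcal{I}$ to satisfaction, an inner induction on $m$ showing $\omega_\rho(x_m)=v_\mathcal{I}(x_m)$ for the converse, and the same key observation that $\psi_{k-1}=\hsA\varphi_{k-1}$ depends only on the last state, so the third conjunct reduces to $w_{x_m}\not\models\psi_{k-1}$. The index bookkeeping ($k-1\geq m$ and $k-1\geq r$) matches the paper's as well.
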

\begin{proof}
The proof is by induction on $k\geq 0$. 
\newline (Base case: $k=0$). The thesis trivially holds.
\newline
(Inductive case: $k\geq 1$). We first prove the $(\Leftarrow)$ implication for both item 1 and item 2.
\begin{compactitem}
\item (Item 1) Assume that $k\geq r$ and $\Ku_\mathcal{I},w_{x_r}\models \psi_k$. Thus, there exists $\rho\in\Trk_{\Ku_\mathcal{I}}$ such that $\rho=w_{x_r}\cdots s_0$ does not pass through any $\overline{s_m}$, $1\leq m\leq r$ and $\Ku_\mathcal{I},\rho\models \varphi_k$. We show by induction on $1\leq m\leq r$ that $\omega_\rho(x_m)=v_\mathcal{I}(x_m)$. 

	\begin{compactitem}
		\item Let us consider first the case where $\rho$ passes through $w_{x_m}$, implying that $\omega_\rho(x_m)=\top$; thus $\Ku_\mathcal{I},\rho\models x_m\wedge \neg r_m$ and $\Ku_\mathcal{I},\rho\models F_m(x_1,\cdots ,x_{m-1},Z_m)$. If $m=1$ (base case), since $F_1$ is satisfiable, then $v_\mathcal{I}(x_1)=\top$. If $m\geq 2$ (inductive case), by the inductive hypothesis, it holds that $\omega_\rho(x_1)=v_\mathcal{I}(x_1)$, \dots , $\omega_\rho(x_{m-1})=v_\mathcal{I}(x_{m-1})$. Since $\Ku_\mathcal{I},\rho\models F_m(x_1,\cdots ,x_{m-1},Z_m)$ or, equivalently, $F_m(\omega_\rho(x_{1}),\cdots , \omega_\rho(x_{m-1}), \omega_\rho(Z_m))=\top$, it holds that $F_m(v_\mathcal{I}(x_{1}),\!\cdots\! , v_\mathcal{I}(x_{m-1}), \allowbreak  \omega_\rho(Z_m))=\top$ and, by definition of $v_\mathcal{I}$, $v_\mathcal{I}(x_m)=\top$.
		
		\item Conversely, let us consider the case where $\rho$ passes through $\overline{w_{x_m}}$, implying that $\omega_\rho(x_m)=\bot$ and $m<r$, as we are assuming $\fst(\rho)=w_{x_r}$. In this case, the prefix $w_{x_r}\cdots \overline{w_{x_m}}$ of $\rho$ satisfies both $\bigvee_{i=1}^n \hsA p_{\overline{x_i}}$ and $\hsA\big(\neg s \wedge \ell_{=2}\wedge \hsA (\ell_{=2}\wedge \neg\psi_{k-1})\big)$. Therefore, $\Ku_\mathcal{I},\overline{w_{x_m}}\cdot \overline{s_m}\models \hsA (\ell_{=2}\wedge \neg\psi_{k-1})$ and $\Ku_\mathcal{I}, \overline{s_m}\cdot w_{x_m}\not\models \psi_{k-1}$, with $\psi_{k-1}=\hsA\varphi_{k-1}$. Hence $\Ku_\mathcal{I}, w_{x_m}\not\models \psi_{k-1}$. Since $1\leq m<r$, we have $1\leq m<r\leq k$, thus $k'=k-1\geq m\geq 1$. By the inductive hypothesis (on $k'=k-1$), we get that $v_\mathcal{I}(x_m)=\bot$.
	\end{compactitem}
Therefore $v_\mathcal{I}(x_r)=\omega_\rho(x_r)$ and, since $w_{x_r}\in\states(\rho)$, we have that $\omega_\rho(x_r)=\top$ and the thesis, that is, $v_\mathcal{I}(x_r)=\top$, follows.
	
\item (Item 2) Assume that $k\geq r+1$ and $\Ku_\mathcal{I},\overline{w_{x_r}}\models \psi_k$. The proof follows the same steps as the previous case and it is thus only sketched: there exists $\rho\in\Trk_{\Ku_\mathcal{I}}$ such that $\rho=\overline{w_{x_r}}\cdots s_0$ does not pass through any $\overline{s_m}$, for $1\leq m\leq r$, and $\Ku_\mathcal{I},\rho\models \varphi_k$. The only thing which changes is that the prefix $\overline{w_{x_r}}$ satisfies $\bigvee_{i=1}^n \hsA p_{\overline{x_i}}$, thus as before we get $\Ku_\mathcal{I}, w_{x_r}\not\models \psi_{k-1}$. Now, $k'=k-1\geq r\geq 1$ and, by the inductive hypothesis (on $k'=k-1$), it holds that $v_\mathcal{I}(x_r)=\bot$.

\end{compactitem}

%For $k\geq 1$, 
We prove now the converse implication $(\Rightarrow)$ for both item 1 and item 2.
\begin{compactitem}
\item (Item 1) Assume that $k\geq r$ and $v_\mathcal{I}(x_r)=\top$. Let us consider the track $\rho\in\Trk_{\Ku_\mathcal{I}}$, $\rho=w_{x_r}\cdots s_0$ never passing through any $\overline{s_m}$, for $1\leq m\leq r$, such that $w_{x_m}\in\states(\rho)$ if $v_\mathcal{I}(x_m)=\top$, and $\overline{w_{x_m}}\in\states(\rho)$ if $v_\mathcal{I}(x_m)=\bot$, for $1\leq m\leq r$. Such a choice of 
$\rho$ ensures that $v_\mathcal{I}(x_m)=\omega_\rho(x_m)$. In addition, the choice of $\rho$ has to induce also the proper evaluation of local variables,
that is, if $v_\mathcal{I}(x_m)=\top$, then for $1\leq u_m\leq j_m$, $w_{z_m^{u_m}}\in\states(\rho)$ if $F_m(v_\mathcal{I}(x_1),\cdots , v_\mathcal{I}(x_{m-1}),Z_m)$ is satisfied for $z_m^{u_m}= \top$, $\overline{w_{z_m^{u_m}}}\in\states(\rho)$ otherwise. Notice that such a choice of $\rho$ is always possible.
We have to show that $\Ku_\mathcal{I},\rho\models \varphi_k$, hence $\Ku_\mathcal{I},w_{x_r}\models \psi_k$.
\begin{compactitem}
	\item For all $1\leq m\leq r$ such that $v_\mathcal{I}(x_m)=\top$, it holds that $F_m(v_\mathcal{I}(x_1),\cdots , v_\mathcal{I}(x_{m-1}),Z_m)$ is satisfiable. Hence, by our choice of $\rho$, $F_m(\omega_\rho(x_1),\cdots , \omega_\rho(x_{m-1}),\omega_\rho(Z_m))=\top$, or, equivalently, $\Ku_\mathcal{I},\rho\models F_m(x_1, \cdots, x_{m-1}, Z_m)$. Therefore, $\Ku_\mathcal{I},\rho\models \bigwedge_{i=1}^n \Big((x_i\wedge \neg r_i)\rightarrow F_i(x_1, \cdots, x_{i-1}, Z_i)\Big)$. 

	\item Conversely, for all $1\leq m< r$ such that $v_\mathcal{I}(x_m)=\bot$ ($m\neq r$ as, by hypothesis, $v_\mathcal{I}(x_r)=\top$), it holds that $\overline{w_{x_m}}\in\states(\rho)$. Since $m<r$, it holds that $k\geq r>m$ and $k-1\geq m\geq 1$. By the inductive hypothesis, we have that $\Ku_\mathcal{I},w_{x_m}\not\models \psi_{k-1}$. It follows that $\Ku_\mathcal{I},\overline{s_m}\cdot w_{x_m}\models \neg\psi_{k-1}\wedge \ell_{=2}$, $\Ku_\mathcal{I},\overline{w_{x_m}}\cdot\overline{s_m}\models \neg s\wedge \ell_{=2}\wedge\hsA(\neg\psi_{k-1}\wedge \ell_{=2})$ and $\Ku_\mathcal{I},\overline{w_{x_m}}\models\hsA( \neg s\wedge\ell_{=2}\wedge\hsA(\neg\psi_{k-1}\wedge \ell_{=2}))$. Hence, $\Ku_\mathcal{I},\rho\models [B]((\bigvee_{i=1}^n \hsA p_{\overline{x_i}})\rightarrow\hsA( \neg s\wedge\ell_{=2}\wedge\hsA(\neg\psi_{k-1}\wedge \ell_{=2})))$.
\end{compactitem} Combining the two cases, we can conclude that $\Ku_\mathcal{I},\rho\models \varphi_k$.
\item (Item 2) Assume that $k\geq r+1$ and $v_\mathcal{I}(x_r)=\bot$. The proof is as before and it is sketched. In this case, we choose a track $\rho=\overline{w_{x_r}}\cdots s_0$. Since $k'=k-1\geq r$, by the inductive hypothesis, $\Ku_\mathcal{I},w_{x_r}\not\models \psi_{k-1}$, and we can prove that $\Ku_\mathcal{I},\overline{w_{x_r}}\models\hsA( \neg s\wedge\ell_{=2}\wedge\hsA(\neg\psi_{k-1}\wedge \ell_{=2}))$.\qedhere
\end{compactitem}
\end{proof}

The correctness of the reduction from SNSAT to model checking for $\AB$ follows as a corollary.

\begin{corollary}\label{corol:c} Let $\mathcal{I}$ be an instance of SNSAT, with $|\mathcal{I}|=n$, and let $\Ku_\mathcal{I}$ and $\mathcal{F}_\mathcal{I}$
%=\{\psi_k \mid 0\leq k\leq n+1\}$ 
be defined as above. Then, $v_\mathcal{I}(x_n)=\top \iff \Ku_\mathcal{I}\models [B]\bot \rightarrow\psi_n$.
\end{corollary}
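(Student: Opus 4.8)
The plan is to peel the model-checking statement $\Ku_\mathcal{I}\models [B]\bot\rightarrow\psi_n$ down to a single local satisfaction claim at the initial state, and then close the argument with a direct instantiation of Theorem~\ref{thcorr}. First I would unfold the definition of $\models$ for Kripke structures: $\Ku_\mathcal{I}\models [B]\bot\rightarrow\psi_n$ holds precisely when every \emph{initial} track $\rho'\in\Trk_{\Ku_\mathcal{I}}$ satisfies $[B]\bot\rightarrow\psi_n$.

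The key observation is that the guard $[B]\bot$ acts as a length-one selector. Since $B_\mathbb{I}$ relates a track to its proper prefixes, $\Ku_\mathcal{I},\rho'\models[B]\bot$ holds iff $\rho'$ has no proper prefix, i.e., iff $|\rho'|=1$. Hence for every initial track with $|\rho'|\geq 2$ the implication is vacuously satisfied, and the entire statement collapses to the requirement that the unique initial track of length $1$ satisfy $\psi_n$. Because $w_{x_n}$ is the initial state of $\Ku_\mathcal{I}$, that unique track is exactly $w_{x_n}$, so that $\Ku_\mathcal{I}\models [B]\bot\rightarrow\psi_n$ iff $\Ku_\mathcal{I},w_{x_n}\models\psi_n$.

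Finally I would instantiate Theorem~\ref{thcorr}(1) with $r=n$ and $k=n$: since $k=n\geq r=n$, it yields $v_\mathcal{I}(x_n)=\top\iff\Ku_\mathcal{I},w_{x_n}\models\psi_n$. Chaining this equivalence with the reduction of the previous paragraph gives $v_\mathcal{I}(x_n)=\top\iff\Ku_\mathcal{I}\models[B]\bot\rightarrow\psi_n$, as required.

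There is essentially no hard step here: the only point needing care is recognising that $[B]\bot$ picks out precisely the singleton tracks and therefore filters the universal quantification over initial tracks down to the single initial state $w_{x_n}$. Once that is in place, the corollary is an immediate specialisation of Theorem~\ref{thcorr}, whose proof already carries all the substantive work.
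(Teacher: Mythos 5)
Your proof is correct and follows essentially the same route as the paper's: both use the observation that $[B]\bot$ holds exactly on length-one tracks, so the universally quantified model-checking statement reduces to $\Ku_\mathcal{I},w_{x_n}\models\psi_n$, which Theorem~\ref{thcorr} (item 1 with $k=r=n$) equates with $v_\mathcal{I}(x_n)=\top$. Your write-up just makes the reduction slightly more explicit than the paper's terse two-directional argument.
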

\begin{proof}
By Theorem \ref{thcorr}, $v_\mathcal{I}(x_n)=\top \iff \Ku_\mathcal{I},w_{x_n}\models \psi_n$. If $v_\mathcal{I}(x_n)=\top$, then $\Ku_\mathcal{I},w_{x_n}\models \psi_n$ and, since $w_{x_n}$ is the only initial track satisfying $[B]\bot$ (only satisfiable by tracks of length 1), $\Ku_\mathcal{I}\models [B]\bot \rightarrow\psi_n$. Conversely, if $\Ku_\mathcal{I}\models [B]\bot \rightarrow\psi_n$, then $\Ku_\mathcal{I},w_{x_n}\models \psi_n$, allowing us to conclude that $v_\mathcal{I}(x_n)=\top$.
\end{proof}

Eventually we can state the complexity of the problem.

\begin{corollary}
	The model checking problem for $\AB$ formulas over finite Kripke structures is $\PTIME^{\NP}$-hard (under $\LOGSPACE$ reductions).
\end{corollary}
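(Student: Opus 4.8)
The plan is to read off, from the construction just given, a $\LOGSPACE$ reduction from SNSAT---which is $\PTIME^{\NP}$-complete~\cite{LMS01}---to the model checking problem for $\AB$, namely the mapping that sends an instance $\mathcal{I}$ with $|\mathcal{I}|=n$ to the pair $(\Ku_\mathcal{I},\,[B]\bot \rightarrow \psi_n)$. Correctness is already in hand: by Corollary~\ref{corol:c} we have $v_\mathcal{I}(x_n)=\top \iff \Ku_\mathcal{I}\models [B]\bot \rightarrow \psi_n$, and $v_\mathcal{I}(x_n)=\top$ is by definition exactly the condition that $\mathcal{I}$ is a positive instance of SNSAT. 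Hence any procedure deciding $\AB$ model checking decides SNSAT through this mapping, and it remains only to verify that the mapping is computable in logarithmic working space.

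For the Kripke structure, I would observe that $\Ku_\mathcal{I}$ is built from $n$ copies of a single fixed gadget---one per variable $x_i$---glued in reverse order together with the collector state $s_0$. Every state, its label, and every edge is determined locally by the index $i\le n$ of the gadget it belongs to (and, for the local variables, by an index $u_i\le j_i$). A transducer can therefore enumerate the vertices and edges of $\Ku_\mathcal{I}$ while storing only these current indices, each fitting in $O(\log n)$ bits, and can read off the label $X \cup Z \cup \{s,t\} \cup R \cup \{p_{\overline{x_i}}\}$ of any state directly from them. Thus $\Ku_\mathcal{I}$ has size polynomial in $|\mathcal{I}|$ and is producible in $\LOGSPACE$.

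The delicate part is the formula. Although $\psi_n$ is defined by the recurrence $\psi_k=\hsA\varphi_k$, with $\varphi_k$ embedding one copy of each $F_i$ and a single occurrence of $\psi_{k-1}$, the key point is that this nesting is \emph{linear rather than branching}: unfolding the recurrence gives $|\psi_n| = O\big(n \cdot \big(n + \sum_{i=1}^{n} |F_i|\big)\big)$, which is polynomial, since $\psi_{k-1}$ appears exactly once inside $\varphi_k$ and so there is no exponential blow-up. To emit $\psi_n$ in logarithmic space, the transducer prints, for each $k=n,n-1,\ldots,1$ in turn, the fixed prefix of the template for $\varphi_k$ up to the point where $\neg\psi_{k-1}$ occurs (copying each $F_i$ straight from the input tape using an index counter), then prints $\psi_0=\bot$, and finally closes the $n$ nested contexts by emitting the matching suffixes in reverse---an operation requiring nothing more than a down-counter. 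Two counters bounded by $n$, together with the index bookkeeping for the inner $\bigwedge$ and for copying the $F_i$, suffice, so the whole formula $[B]\bot \rightarrow \psi_n$ is produced within $O(\log n)$ space.

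I expect the main obstacle to be precisely this last point: one must check both that the nested formula remains of polynomial size and that its strictly nested (not tree-shaped) form lets a $\LOGSPACE$ machine reproduce it with a bounded number of counters instead of an unbounded recursion stack. Once the polynomial-size bound and the counter-based emission are established, combining them with Corollary~\ref{corol:c} yields $\PTIME^{\NP}$-hardness of $\AB$ model checking under $\LOGSPACE$ reductions; the symmetric construction (replacing $\hsA,\hsB$ by $\hsAt,\hsE$) gives the same bound for $\AbarE$.
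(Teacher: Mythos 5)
Your proposal is correct and follows essentially the same route as the paper: correctness is delegated to Corollary~\ref{corol:c}, and the reduction is shown to be a $\LOGSPACE$ one by observing that $\Ku_\mathcal{I}$ and $\psi_n$ are polynomial-size and repetitively structured. Your additional detail --- the explicit $O\big(n\cdot(n+\sum_i |F_i|)\big)$ bound coming from the single (linear, non-branching) occurrence of $\psi_{k-1}$ inside $\varphi_k$, and the counter-based emission of the nested formula --- merely spells out what the paper summarizes as ``their structures are repetitive.''
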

\begin{proof}
	The result follows from Corollary \ref{corol:c} considering that, for an instance of SNSAT $\mathcal{I}$, with $|\mathcal{I}|=n$, $\Ku_\mathcal{I}$ and $\psi_n\in \mathcal{F}_\mathcal{I}$ have a size polynomial in $n$ and in the length of the formulas of $\mathcal{I}$. Moreover, their structures are repetitive, therefore they can be built by using logarithmic working space.
\end{proof}

We can prove the same complexity result for the symmetric fragment $\AbarE$, just by transposing the edges of $\Ku_\mathcal{I}$, and by replacing $[B]$ with $[E]$ and $\hsA$ with $\hsAt$ in the definition of $\psi_n$. 

We summarize all the $\PTIME^{\NP}$-completeness results achieved in the following statement.
\begin{corollary}
The model checking problem for $\AB$, $\AbarE$, $\AAbarB$, and $\AAbarE$ formulas over finite Kripke structures is $\PTIME^{\NP}$-complete.
\end{corollary}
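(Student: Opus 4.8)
The plan is to obtain completeness for all four fragments purely by combining the membership and hardness results already in place with the syntactic inclusions among the fragments, so this corollary is essentially a bookkeeping step: no new construction is needed.

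For the upper bound, I would start from the fact, established for Algorithm~\ref{MC} via Theorem~\ref{MCsoundCompl} and its corollary, that model checking for $\AAbarB$ is in $\PTIME^{\NP}$, together with its symmetric adaptation placing $\AAbarE$ in $\PTIME^{\NP}$. The key observation is that $\AB$ is a syntactic fragment of $\AAbarB$: every $\AB$ formula uses only the modalities $\hsA$ and $\hsB$, which are among those of $\AAbarB$, so it is verbatim an $\AAbarB$ formula and membership in $\PTIME^{\NP}$ carries over unchanged. Symmetrically, $\AbarE$ is a fragment of $\AAbarE$, giving $\AbarE \in \PTIME^{\NP}$. Thus all four fragments lie in $\PTIME^{\NP}$.

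For the lower bound, I would invoke the $\PTIME^{\NP}$-hardness of $\AB$ proved via the $\LOGSPACE$ reduction from SNSAT (Corollary~\ref{corol:c} together with the subsequent hardness corollary), and the symmetric construction---transposing the edges of $\Ku_\mathcal{I}$ and replacing $[B]$ by $[E]$ and $\hsA$ by $\hsAt$---that yields $\PTIME^{\NP}$-hardness of $\AbarE$. Here the inclusion runs the opposite way: since every instance $(\Ku_\mathcal{I},\Phi_\mathcal{I})$ produced by the reduction is also a valid $\AAbarB$ instance, the same reduction witnesses $\PTIME^{\NP}$-hardness of $\AAbarB$; symmetrically, $\AbarE \subseteq \AAbarE$ lifts hardness to $\AAbarE$.

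Putting the two directions together, each of $\AB$, $\AbarE$, $\AAbarB$, and $\AAbarE$ is simultaneously in $\PTIME^{\NP}$ and $\PTIME^{\NP}$-hard, hence $\PTIME^{\NP}$-complete. The only point demanding care---rather than a genuine obstacle---is to match each bound with the correct direction of fragment inclusion: membership propagates downward to sub-fragments (from $\AAbarB$ to $\AB$), whereas hardness propagates upward to super-fragments (from $\AB$ to $\AAbarB$). All the substantive work has already been carried out in the algorithm of Section~\ref{sec:AAbarBalgo} and the SNSAT encoding of Section~\ref{sec:ABhard}.
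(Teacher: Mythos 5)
Your proposal is correct and matches the paper's intended argument exactly: the paper treats this corollary as an immediate combination of the $\PTIME^{\NP}$ membership of $\AAbarB$ (and $\AAbarE$) from Section~\ref{sec:AAbarBalgo} with the $\PTIME^{\NP}$-hardness of $\AB$ (and $\AbarE$) from Section~\ref{sec:ABhard}, with membership propagating down to sub-fragments and hardness propagating up to super-fragments. Nothing further is needed.
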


We conclude the paper by providing a complexity upper and lower bound for $\AbarB$ and the symmetric fragment $\AE$. A $\Thsq$ model checking algorithm for $\AbarB$ formulas can be obtained by a suitable adaptation of the one for $\AAbar$ we devised in \cite{MMPS16}.
Due to the lack of space, we outline the construction in the appendix of~\cite{techrep}, and state here only the result. As for the hardness, we can observe that the $\Th$-hardness of $\Abar$ and $\A$, proved in~\cite{MMPS16}, immediately propagates to $\AbarB$ and $\AE$, respectively.

\begin{theorem}
The model checking problem for $\AbarB$ and $\AE$ formulas over finite Kripke structures is in $\Thsq$ and it is hard for $\Th$.
\end{theorem}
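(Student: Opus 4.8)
The plan is to treat the two bounds independently: the $\Th$-hardness will follow from a trivial syntactic containment, while $\Thsq$ membership will be obtained by retrofitting the $\AAbar$ algorithm of \cite{MMPS16} with the modality $\hsB$. For the lower bound, observe that $\Abar$ is a syntactic fragment of $\AbarB$ and $\A$ is a syntactic fragment of $\AE$, with identical semantics on the shared sub-language. Hence the $\LOGSPACE$ reductions proving $\Th$-hardness of $\Abar$ and $\A$ in \cite{MMPS16} produce, verbatim, $\Th$-hard instances of $\AbarB$ and $\AE$, respectively, and nothing further is needed.

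For the upper bound I would start from the $\Thsq$ procedure for $\AAbar$ of \cite{MMPS16}, whose skeleton is that of Algorithm~\ref{MC}: it recurses on the modal subformulas by increasing nesting depth of $\hsAt$ and, at each level, fills the state-indexed vector $V_{\Abar}(\phi,\cdot)$ by $\NP$-oracle calls that guess a bounded-length witnessing track; the reason it stays in $\Thsq$ rather than $\PTIME^{\NP}$ is that these calls can be arranged so that only $O(\log^2 n)$ of them are adaptive. The goal is to show that folding in $\hsB$ costs nothing in this accounting. The decisive structural fact --- the same one highlighted in the discussion preceding Section~\ref{sec:AAbarBalgo} --- is that every proper prefix $\tilde\rho$ of a track $\rho$ satisfies $\fst(\tilde\rho)=\fst(\rho)$. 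Thus the only track-non-local information that $\hsAt$ can extract depends solely on the first state, and that first state is common to $\rho$ and all its prefixes. Consequently, once $V_{\Abar}$ has been computed for the $\hsAt$-modal subformulas of the current level, the truth of the surrounding $\hsB$-and-Boolean part over any guessed track of bounded length is computable deterministically, bottom-up over prefixes, exactly through the recurrence of Algorithm~\ref{Oracle} (in particular its clause $\varphi=\hsB\varphi_1$). The modality $\hsB$ is therefore absorbed into the deterministic verification phase of each oracle call and triggers no new adaptive query, so the adaptive structure of the whole computation coincides with that of the pure $\Abar$ procedure and the $\Thsq$ bound is preserved. The polynomial bound $|W|\cdot(2|\psi|+1)^2$ on the length of the guessed tracks is legitimate because the polynomial-size model-track property of Theorem~10 of \cite{bmmps16} applies to $\AbarB\subseteq\AAbarBBbar$.

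The step I expect to be the crux is proving that this folding leaves the adaptive-query count untouched when $\hsB$ and $\hsAt$ are nested. The delicate case is a subformula $\hsB\theta$ with $\theta$ itself containing $\hsAt$-modal subformulas: one must verify that the state-indexed values these subformulas require on a prefix $\tilde\rho$ coincide with the values already tabulated for $\rho$ in $V_{\Abar}$ --- which again holds precisely because $\fst(\tilde\rho)=\fst(\rho)$ --- so that evaluating $\hsB$ never forces a fresh oracle round. This is exactly the point where $\AbarB$ parts company with $\AB$: there $\hsB$ composes with the forward modality $\hsA$ to constrain tracks departing from $\lst(\tilde\rho)$, a state that varies with the prefix, which is what makes the SNSAT reduction of Section~\ref{sec:ABhard} go through and forces the full $\PTIME^{\NP}$. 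Once the invariance is established, correctness of the table recurrences and the resource bounds transfer directly from the proofs of Lemma~\ref{lemmaOracle} and Theorem~\ref{MCsoundCompl} and from \cite{MMPS16}. The fragment $\AE$ is settled symmetrically, by running the same argument on the Kripke structure with transposed edges and exchanging $\hsAt$ with $\hsA$ and $\hsB$ with $\hsE$, using the dual invariance $\lst(\tilde\rho)=\lst(\rho)$ for suffixes.
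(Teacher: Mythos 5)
Your proposal takes essentially the same route as the paper: the $\Th$-hardness is inherited from $\Abar$ and $\A$ by syntactic inclusion, and the $\Thsq$ upper bound comes from adapting the $\AAbar$ algorithm of \cite{MMPS16}, absorbing $\hsB$ into the deterministic prefix-by-prefix verification inside each oracle call via the invariance $\fst(\tilde{\rho})=\fst(\rho)$ — which is exactly the observation the paper highlights and the construction it defers to the appendix of \cite{techrep}. Your sketch is consistent with that intended argument, so nothing further is needed.
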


\section{Conclusions and future work}

In this paper, we have proved that the model checking problem for the HS fragments $\AB$, $\AbarE$, $\AAbarB$, and $\AAbarE$ is $\PTIME^{\NP}$-complete.
%In this paper we have studied the complexity of model checking, under the homogeneity assumption, for some fragments of the Halpern and Shoham's interval 
%temporal logic HS, namely $\AB$, $\AbarE$, $\AAbarB$, and $\AAbarE$, which turns out to be complete for $\PTIME^{\NP}$. 
%The result is interesting since there are not many natural problems known in the literature to be complete for that class.
They are thus somehow ``halfway'' between $\AAbarBBbar$, $\AAbarEEbar$, and $\AAbarBbarEbar$, which are $\Psp$-complete~\cite{bmmps16,MMP15B,MMP15}, and $\HSprop$, $\B$, and $\E$, which are $\co\NP$-complete~\cite{bmmps16,MMP15B}, and $\A$, $\Abar$, and $\AAbar$, whose model checking is in $\Thsq$~\cite{MMPS16}.
%Moreover, the considered  fragment $\AB$ is relevant having been thoroughly investigated in the literature from the point of view of satisfiability  and 
%expressiveness. (When interpreted over discrete linear orders the satisfiability problem is $\EXPSPACE$-complete and the logic captures the %$\omega$-regular languages subsuming, then, LTL). 
%
In addition, we have shown that model checking for the HS fragments $\AbarB$ and $\AE$ has a lower complexity (it is in between $\Th$ and $\Thsq$) \cite{techrep}.

%thus adding some other examples of problems complete for that class from the area of model checking. 
 
As for future work, we are looking for possible improvements to known complexity results for (full) HS model checking. We know that it is $\EXPSPACE$-hard
(we proved $\EXPSPACE$-hardness of its fragment $\BE$), while the only available decision procedure is nonelementary. 
%We guess that a better upper bound can be given, and we are currently investigating an approach based on automata theory techniques which seems to be 
%promising.
We also started a comparative study of the expressiveness of HS fragments (with the current semantics as well as with some variants of it, which limit past/future branching) and of standard temporal logics, such as LTL, CTL, and CTL$^*$.

\smallskip

\noindent \textbf{Acknowledgments}.
The work by Alberto Molinari, Angelo Montanari, and Pietro Sala has been supported 
by the GNCS project \emph{Logic, Automata, and Games for Auto-Adaptive Systems}.

\bibliographystyle{eptcs}
\bibliography{bibx}

% \clearpage
% \appendix
% \input{appendix}

%\clearpage
%\input{lattice}

\end{document}